\NewDocumentCommand{\evalat}{sO{\big}mm}{%
	\IfBooleanTF{#1}
	{\mleft. #3 \mright|_{#4}}
	{#3#2|_{#4}}%
}%% evaluated at
\numberwithin{equation}{section}
\newtheorem{theorem}{Theorem}[section]
\journal{arXiv}
\begin{document}
	\title{A mechanistic model for airborne and direct human-to-human transmission of COVID-19: Effect of mitigation strategies and immigration of infectious persons}
	
\author[mymainaddress]{Saheb Pal}\ead{saheb92.math@gmail.com}
\author[mysecondaryaddress]{Indrajit Ghosh\corref{mycorrespondingauthor}}\cortext[mycorrespondingauthor]{Corresponding author}\ead{indra7math@gmail.com}
\address[mymainaddress]{Department of Mathematics, Visva-Bharati, Santiniketan-731235, India}
\address[mysecondaryaddress]{Department of Computational and Data Sciences, Indian Institute of Science,\\ Bangalore - 560012, Karnataka, India} 
\begin{abstract}
The COVID-19 pandemic is the most significant global crisis since World War II that affected almost all the countries of our planet. To control the COVID-19 pandemic outbreak, it is necessary to understand how the virus is transmitted to a susceptible individual and eventually spread in the community. The primary transmission pathway of COVID-19 is human-to-human transmission through infectious droplets. However, a recent study by Greenhalgh et al. (Lancet: 397:1603-1605, 2021) demonstrates 10 scientific reasons behind the airborne transmission of SARS-COV-2. In the present study, we introduce a novel mathematical model of COVID-19 that considers the transmission of free viruses in the air beside the transmission of direct contact with an infected person. The basic reproduction number of the epidemic model is calculated using the next-generation operator method and observed that it depends on both the transmission rate of direct contact and free virus contact. The local and global stability of disease-free equilibrium (DFE) is well established. Analytically it is found that there is a forward bifurcation between the DFE and an endemic equilibrium using central manifold theory. Next, we used the nonlinear least-squares technique to identify the best-fitted parameter values in the model from the observed COVID-19 mortality data of two major districts of India. Using estimated parameters for Bangalore urban and Chennai, different control scenarios for mitigation of the disease are investigated. Results indicate that the vaccination of susceptible individuals and treatment of hospitalized patients are very crucial to curtail the disease in the two locations. It is also found that when a vaccine crisis is there, the public health authorities should prefer to vaccinate the susceptible people with compared to the recovered persons who are now healthy. Along with face mask use, treatment of hospitalized patients and vaccination of susceptibles, immigration should be allowed in a supervised manner so that economy of the overall society remain healthy.

\end{abstract}

\begin{keyword}
COVID-19; Airborne transmission; Compartmental model; Mathematical analysis; Mitigation strategies.
\end{keyword}

%\linenumbers

\maketitle
\section{Introduction}\label{Sec:Intro}
The current outbreak of coronavirus disease 2019 (COVID-19) has significantly affected public health and the economy worldwide. As of 31 May 2021, more than 171 million COVID-19 cases and 3 million deaths have been reported globally \cite{Worldometer2021covid}. COVID-19 is an emerging respiratory infectious disease that was first detected in early December 2019 in Wuhan, China. The virus that causes COVID-19 is severe acute respiratory syndrome coronavirus 2 (SARS-CoV-2), which spread rapidly throughout the globe. The World Health Organization (WHO) initially declared the COVID-19 outbreak as a public health emergency of international concern but with the spread across and within countries leading to a rapid increase in the number of cases, the outbreak was officially declared a global pandemic on 11 March 2020 \cite{who2020rolling}. Given the global scenario on the series of waves and unprecedented strains of COVID-19, there is a need for more investigation to timely and effectively curtail the spread of the disease. 

Mathematical modelling is a very versatile and effective instrument for understanding infectious disease transmission dynamics and helps us to develop preventive measures for controlling the infection spread both qualitatively and quantitatively. To model the epidemiological data of infectious diseases, compartmental models are widely used due to their simplicity and well-studied applications \cite{kermack1927contribution}. In this regard, the recent epidemiological threat of COVID-19 is not an exception. In such types of models, the population is categorized into several classes, or compartments, based on the stage of the infection that is affecting them. These models are governed by a system of ordinary differential equations that take into account the time-discoursed infection status of the population compartments. Wu et al. \cite{wu2020nowcasting} first developed a mathematical model for COVID-19 which is based on SEIR (Susceptible-Exposed-Infectious-Recovered) model and explains the transmission dynamics and estimate the national and global spread of COVID-19. After that, various mathematical models were proposed by considering several stages of susceptible and infected populations \cite{yan2020impact, mwalili2020seir, choi2020optimal, senapati2021impact}. Many researchers studied the dynamics of COVID-19  using real incidence data of the affected countries and examined different characteristics of the outbreak as well as evaluated the effect of intervention strategies implemented to suppress the outbreak in respective countries \cite{yan2020impact,choi2020optimal}. 

Presently, the second wave of COVID-19 is affecting most of the countries in the world. In a densely populated country like India, this scenario is more threatening. On April 15, 2021, the daily cases of COVID-19 were double the first peak. However, The epidemic evolution of the country is quite complex due to regional inhomogeneities. Thus, governments of various states in India are adopting different strategies to control the outbreak, such as wearing masks, vaccination drives, social distancing guidelines, partial lockdowns, and restricted store hours in place. Thus, gaining an understanding of how this outbreak spread in the form of waves and possible interventions on a short-term basis is urgent. As of 31 May 2021, more than 28 million COVID-19 cases and more than 331 thousand deaths are reported in India \cite{covid19india}. We have chosen to concentrate on two major districts namely, Bangalore urban and Chennai of India for our case studies. These two districts are very crucial hubs of south India and are experiencing a severe second wave of COVID-19. Additionally, taking small regions is also necessary to make reliable projections using mechanistic mathematical models. Due to the small area and populations, it is also easy to implement and supervise specific control measures.

As a case study in India, several researchers proposed models that are fitted to the daily COVID-19 cases and deaths, and examined different control strategies \cite{mandal2020prudent, sardar2020assessment, rafiq2020evaluation, sarkar2020modeling, senapati2021impact, ghosh2021modeling}. Some of the studies explore the vaccine allocation strategy in India due to limited supply and to support relevant policies \cite{foy2021comparing}. However, none of the above studies has considered the airborne transmission of the SARS-COV-2 virus in their respective models with application to Indian districts.

To control a pandemic outbreak, it is necessary to understand how the virus transmits to a susceptible individual and eventually spreads in the community. The primary transmission pathway of COVID-19 is human-to-human propagation through infectious droplets \cite{van2020aerosol}. However, a recent study of Greenhalgh et al. \cite{greenhalgh2021ten} suggests that SARS-COV-2 can be transmitted through the air and they showed 10 scientific reasons behind this. Later on, Addleman et al. \cite{addleman2021mitigating} commented that Canadian public health guidance and policies should be updated to address the airborne mode of transmission. The authors also suggested that addressing airborne transmission requires the expertise of interdisciplinary teams to inform solutions that can end this pandemic faster. Motivated by these scientific pieces of evidence, we consider a compartmental model of SEIR-type including shedding of free virus in the air by infectious persons. To the best of our knowledge, this is the first modelling study that considers the airborne transmission of COVID-19 with applications to Indian districts. We consider that the susceptible population becomes exposed in two different ways: firstly through direct contact with an infected population or touches a surface that has been contaminated. This transmission also occurs through large and small respiratory droplets that contain the virus, which would occur when near an infected person. The next is through the airborne transmission of smaller droplets and particles that are suspended in the air over longer distances and time \cite{editorial2020covid}. We also study the effect of immigration along with popular control strategies. The main focus of our study is to explore the following epidemiological issues:
\begin{itemize}
	\item {Mathematically analyze the COVID-19 transmission dynamics by incorporating the airborne pathway of free SARS-CoV-2 virus into the model structure.}
	
	\item {Impact of anti-COVID drugs on the reduction of hospitalized persons.}
	
	\item {Individual and combined effects of various control measures (use of face mask, vaccination) as well as immigration of infectious persons on the COVID-19 pandemic.}	
\end{itemize}

The remainder of this paper is organized as follows. In the next section, we propose a deterministic compartmental model to describe the disease transmission mechanism. We consider the amount of free virus in the air as a dynamic variable. Section \ref{sec:math} describes the theoretical analysis of the model, which incorporates the existence of positive invariance region, boundedness of solutions, computation of the basic reproduction number, and stability of disease-free equilibrium. Also, in this section, the existence of forward bifurcation of the model system is explored. In Section \ref{Sec:calibration}, we fit the mathematical model using nonlinear least-squares technique from the observed mortality data and estimate unknown parameters. Section \ref{Sec:controls} describe several control mechanisms and immigration of infectives through numerical simulation. We examine the effects of vaccination, treatment by drugs, and use of face masks with different degrees of efficacy as intervention strategies. Finally, in Section \ref{conclusion} we discuss the findings and some concluding remarks about mitigation strategy obtained from our study.

% Multiple countries are now experiencing the second and third wave of the COVID-19 epidemic. Thus, gaining an understanding of how this outbreak spread in the form of waves and possible interventions in a long term basis is urgent.

%%%%%%% Why mathematical modelling, Recent works   %%%%%%%%%%%%%%

%\citet{yan2020impact} studied a mathematical model to explore how media reports impacts individual behaviour  for mitigating  COVID-19 epidemic. \citet{choi2020optimal} studied a game-theoretic model of COVID-19 transmission dynamics  with  vaccination and social distancing as a control strategy. 
%There are two seminal works on COVID-19 epidemic, that consider the free virus transmission in respective mathematical models \cite{saldana2020modeling,mwalili2020seir}, however, at that time, no clinical data was not found for airborne transmission, and thus transmission dynamics of COVID-19 was not fully explored. Also, several control strategies such as vaccination, anti-covid drug, was not considered for combat with the disease.
%%%%%%%%%%%%%   What we study                               %%%%%%%%%%%%%%%%

%In theory, transitions can happen in continuous time, but for practical implementation we discretize time into days.

%%%%%%%%%%%%%%   Orientation  %%%%%%%%%%%%%%%%%%%%%%%%%%
\section{Model description}\label{Sec:Model}
To design the basic mathematical model for COVID-19 transmission dynamics, some general epidemiological factors are considered including the airborne transmission of the free virus. Several studies of novel corona virus suggest that an infected person can be asymptomatic (infectious but not symptomatic) or symptomatic with small, moderate, severe, or critical symptoms \cite{who2020coronavirus, xu2020clinical, liu2020viral, wu2020characteristics}. For simplicity of model formulation, we break the infected individuals in two separated classes, notified and un-notified. Here, the un-notified class contains untested asymptomatic, asymptomatic who are tested negative, untested symptomatic, and symptomatic who are tested negative, whereas notified class contains those symptomatic and asymptomatic persons who are tested COVID-19 positive. Population among both the notified and un-notified class can be hospitalized, when they have a critical health situation. Thus the entire population can be stratified into six main compartments based on the status of the disease: susceptible ($S(t)$), exposed ($E(t)$), un-notified ($I_u(t)$), notified ($I_n(t)$), hospitalized  ($I_h(t)$), and recovered ($R(t)$), at time $t$. Consequently, the total population size is given by $N(t)=S(t)+E(t)+I_u(t)+I_n(t)+I_h(t)+R(t)$. In addition, $V(t)$ describes the amount of free virus in the environment.\\

Now, we consider individuals from each human class have a per capita natural mortality rate $\mu$. The net influx rate of susceptible population per unit time is $\Pi$. This parameter includes new births, immigration, and emigration of susceptible persons. Also, the susceptible population decrease after infection, acquired through the direct contact of notified or un-notified infected individuals. Let $\beta_1$ be the transmission rate for direct contact with the modification factor $\nu$ for notified infected individuals. Then $\nu\in(0,1)$, since the notified persons are advised to take preventive measures like face mask use, social distancing more seriously. In addition, the susceptible population is infected through the contact of the free virus in air \cite{greenhalgh2021ten,addleman2021mitigating}. Let $\beta_2$ be the transmission rate for free virus contact. The interaction of susceptible individuals with infected individuals (un-notified or notified) follows \textit{standard mixing incidence} and with free virus follows \textit{mass action incidence}. For basic details and difference between \textit{standard mixing incidence} and \textit{mass action incidence}, interested readers are referred to Martcheva's book (\cite{martcheva2015introduction}; Chap. 3). Then the differential equation that describes the rate of change of susceptible individuals at time $t$, is given by 

$$\displaystyle{\od{S}{t}} =\Pi-\frac{\beta_1 S( I_u+ \nu I_n)}{N} - \beta_2 SV -\mu S + \theta R.$$

Here, $\theta$ be the rate at which recovered individuals eventually lose the temporal immunity from the infection and become susceptible \cite{sariol2020lessons, foy2021comparing}. 

\begin{figure}[h]
	\includegraphics[width = 1.0\textwidth]{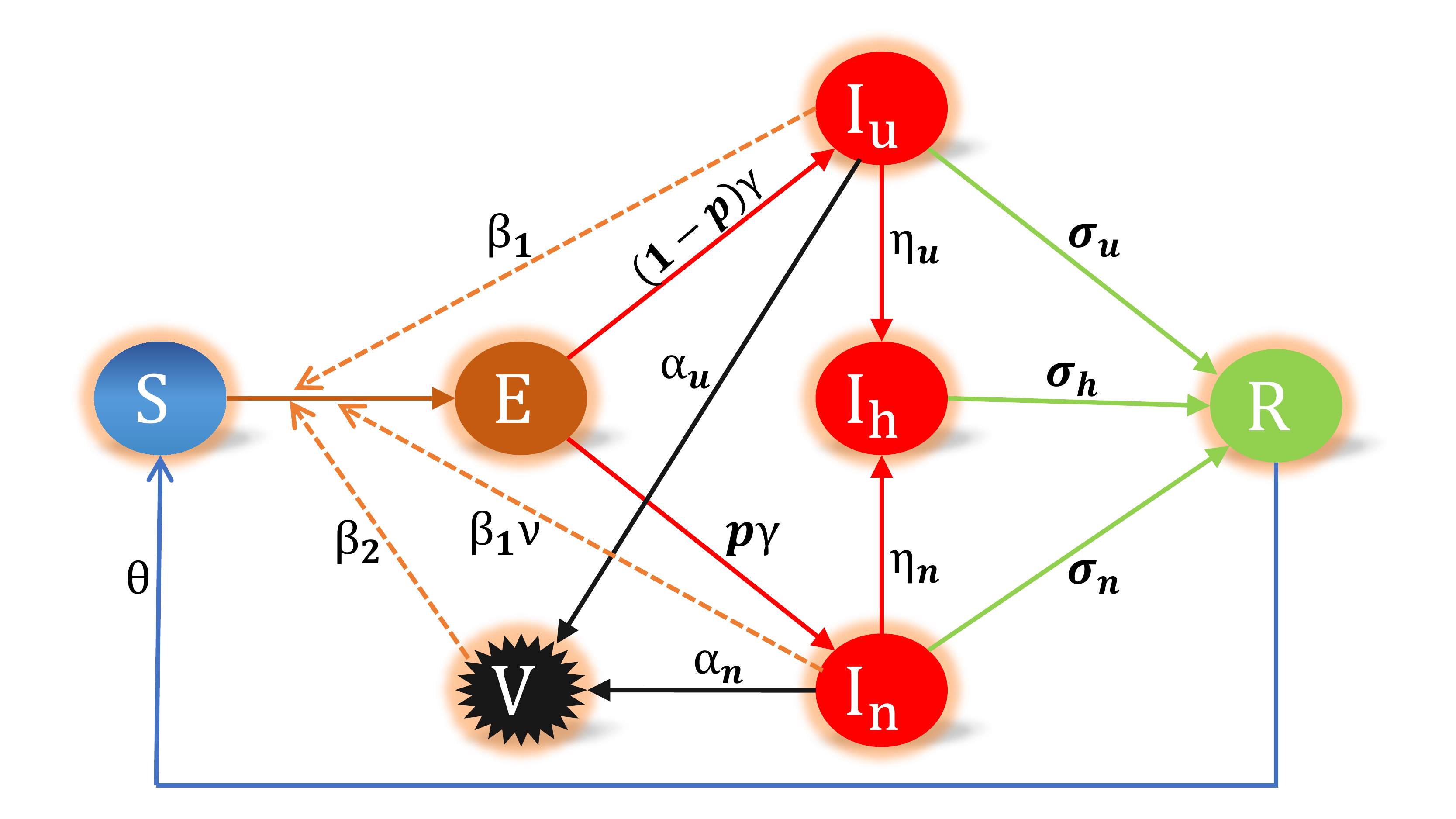}
	\caption{Schematic diagram of the proposed model. Solid arrows represent the transmission rate from one compartment to other, whereas dashed arrows represent interaction between compartments.}
	\label{flowchart}
\end{figure}

Alongside, the exposed population ($E(t)$) who are infected individuals, however still not infectious for the community. By making successful contact with infectives, susceptible individuals become exposed. Also, these people decreases at a rate $\gamma$ and become notified or un-notified. The probability that exposed individuals progress to the notified infectious class is $p$ and that of un-notified class is $1-p$. This assumption leads to the following rate of change in $E$:

$$\displaystyle{\od{E}{t}} = \frac{\beta_1 S( I_u+ \nu I_n)}{N} + \beta_2 SV - (\gamma + \mu) E. $$

The un-notified individuals ($I_u$), who are transferred from exposed class at a rate $(1-p)\gamma$ are the population who are un-notified from COVID-19 disease. Also, the un-notified individuals progress to hospitalized class at a rate $\eta_u$ and recovered at a rate $\sigma_u$. The un-notified individuals are assumed to have no disease-induced mortality rate as most of them will not develop severe symptoms. However, if some of the un-notified persons develop sudden severe symptoms, they will be transferred to the hospital and die as a hospitalized patients. Thus, the equation governing the rate of change of un-notified persons is 

$$\displaystyle{\od{I_u}{t}} =(1-p)\gamma E- (\eta_u+\sigma_u+\mu)I_u.$$

The notified individuals ($I_n$), who are transferred from exposed class at a rate $p\gamma$ are the population who are notified as COVID-19 positive people. Also, the notified individuals reduced by progressing to hospitalized class at a rate $\eta_n$, by recovery at a rate $\sigma_n$, and COVID-19 induced mortality rate $\delta_n$. Therefore we have the following equation 

$$\displaystyle{\od{I_n}{t}} = p \gamma E-(\eta_n + \sigma_n + \mu + \delta_n) I_n. $$

The hospitalized individuals ($I_h$) increase from un-notified or notified people at rates $\eta_n$ or $\eta_u$, respectively. Also, this class of people recover at a rate $\sigma_h$ and decrease through COVID-19 induced mortality rate $\delta_h$. Thus, we have the following rate of change of $I_h$:

$$\displaystyle{\od{I_h}{t}} = \eta_u I_u+ \eta_n I_n - (\sigma_h + \mu + \delta_h)I_h.$$

The recovered individuals ($R$) increase by recovering from each of the infected classes (un-notified, notified and hospitalized) at rates $\sigma_u$, $\sigma_n$, and $\sigma_h$  respectively. As mentioned earlier, $\theta$ is the rate at which recovered population loss immunity against COVID-19 and becomes susceptible again. Thus, following differential equation leads to the rate of change of $R$:

$$\displaystyle{\od{R}{t}} =\sigma_u I_u+\sigma_n I_n + \sigma_h I_h - (\theta + \mu)R.$$

The differential equation describing the rate of change of free virus, increased from the shedding rate of un-notified and notified patients at rates $\alpha_u$ and $\alpha_n$, respectively with a natural clearance rate $\mu_c$. Thus 

$$\displaystyle{\od{V}{t}} = \alpha_u I_u+ \alpha_n I_n - \mu_c V.$$

Fig. \ref{flowchart} describes the flow patterns of individuals between compartments over time.

Assembling all the above differential  equations, we have the following system 
\begin{eqnarray}
	\begin{cases}
		\begin{array}{lll}
			\displaystyle{\od{S}{t}} &=& \Pi-\frac{\beta_1 S( I_u+ \nu I_n)}{N} - \beta_2 SV -\mu S + \theta R, \\
			\displaystyle{\od{E}{t}} &=& \frac{\beta_1 S( I_u+ \nu I_n)}{N} + \beta_2 SV - (\gamma + \mu) E, \\
			\displaystyle{\od{I_u}{t}} &=& (1-p) \gamma E - (\eta_u + \sigma_u +\mu )I_u, \\
			\displaystyle{\od{I_n}{t}} &=&  p \gamma E-(\eta_n + \sigma_n + \mu + \delta_n) I_n, \\
			\displaystyle{\od{I_h}{t}} &=& \eta_u I_u+ \eta_n I_n - (\sigma_h + \mu + \delta_h)I_h,  \\
			\displaystyle{\od{R}{t}} &=& \sigma_u I_u+\sigma_n I_n + \sigma_h I_h - (\theta + \mu)R, \\
			\displaystyle{\od{V}{t}} &=& \alpha_u I_u+ \alpha_n I_n - \mu_c V.
		\end{array}
		\label{EQ:TheModel}
	\end{cases}
\end{eqnarray}

All the parameters involved in the above system are positive. We present the parameters with their biological interpretation, dimensions and realistic values in Table \ref{tab:mod1}.\\

\begin{table}[htbp!]
	\begin{center}
		\caption{Description of model parameters used in the model (\ref{EQ:TheModel}).}
		\label{tab:mod1}
		\begin{tabular}{cp{2.7cm} p{5cm}p{2cm}p{3cm}}
			\hline\hline
			\textbf{Parameter} & \textbf{Dimension} & \textbf{Interpretation} &  \textbf{Value(s)} & \textbf{Reference} \\ \hline
			$\Pi$ & person day$^{-1}$ & Recruitment rate of susceptibles & $\mu \times N_{init}$ & -- \\
			$\beta_1 $& day$^{-1}$ & Transmission rate for direct contact & (0,1) & To be estimated \\
			$\beta_2 $ & copies$^{-1}$ litre day$^{-1}$& Transmission rate for free virus contact & (0,1) & To be estimated \\
			$\nu$ & unitless & Modification factor for notified infectives & 0.1852 & \cite{ghosh2021modeling} \\
			$\theta$ & day$^{-1}$ & Rate at which recovered individuals lose immunity & 1/365 & \cite{foy2021comparing} \\
			$\mu$ & day$^{-1}$ & Natural mortality rate & 0.3891 $\times {10}^{-4}$ & \cite{Worldometer2021demo}  \\
			$\gamma$& day$^{-1}$ & Rate at which the exposed individuals are infected & 0.2  & \cite{li2020early,linton2020incubation} \\
			$\eta_u$ & day$^{-1}$ & Rate at which un-notified patients become hospitalized & (0,1) & To be estimated \\
			$\eta_n$ & day$^{-1}$ & Rate at which notified patients become hospitalized & (0,1) & To be estimated \\
			$p$ & unitless & Proportion of notified infectives & 0.2  & \cite{wu2020characteristics,yang2020epidemiological}\\%infectious
			$\sigma_u$ & day$^{-1}$ & Recovery rate of un-notified patients & 0.17 & \cite{tindale2020transmission,woelfel2020clinical} \\
			$\sigma_n$ & day$^{-1}$  & Recovery rate of notified patients & 0.072 &\cite{lopez2020end} \\
			$\sigma_h$ & day$^{-1}$ & Recovery rate of hospitalized patients & (0,1) & To be estimated \\
			$\delta_n$ & day$^{-1}$ & Disease induced mortality rate of notified patients & 0.0017 & \cite{ghosh2021modeling} \\
			$\delta_h$ & day$^{-1}$ & Disease induced mortality rate of hospitalized patients & (0,1) & To be estimated \\
			$\alpha_u$ & copies litre$^{-1}$ person$^{-1}$ day$^{-1}$ & Virus shedding rate of un-notified patients & (0,10) & To be estimated \\
			$\alpha_n$ & copies litre$^{-1}$ person$^{-1}$ day$^{-1}$ & Virus shedding rate of notified patients & (0,10) & To be estimated \\
			$\mu_c$ & day$^{-1}$ & Natural clearance rate of free virus & 1 & \cite{saldana2020modeling} \\
			\hline\hline
\end{tabular}\end{center}\end{table}

\section{Mathematical analysis}\label{sec:math}
\subsection{Basic properties}\label{subsec:basic}
Here, we explore basic dynamical properties of the model (\ref{EQ:TheModel}). Let us first consider the initial conditions of the model as 
\begin{equation}
	S(0)>0,~E(0),~I_u(0),~I_n(0),~I_h(0),~R(0),~V(0)\geq 0.\label{model_IC}
\end{equation}
The dynamical system (\ref{EQ:TheModel}) is well posed as shown in the following theorem.
\begin{theorem}
	Consider the model (\ref{EQ:TheModel}) with initial conditions (\ref{model_IC}). The nonnegative orthant $\mathbb{R}_+^7$ is invariant under the flow of (\ref{EQ:TheModel}), including $S$ remaining positive with the advancement of time. Moreover, the solutions of the system (\ref{EQ:TheModel}) are bounded in the region
	
	$$\Omega=\left\{(S,~E,~I_u,~I_n,~I_h,~R,~V)|~~0<S+E+I_u+I_n+I_h+R\leq \frac{\Pi}{\mu},~~ 0\leq V<\frac{\Pi(\alpha_u+\alpha_n) }{\mu\mu_c}\right\}.$$
\end{theorem}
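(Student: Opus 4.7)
The statement has two parts: positive invariance of $\mathbb{R}_+^7$ (with $S$ strictly positive) and confinement of trajectories to the region $\Omega$. I would prove invariance first, because the subsequent boundedness estimates rely on nonnegativity of the components.

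For positive invariance, I would use the standard tangency argument: on each coordinate face $\{x_i = 0\}$ of the nonnegative orthant, check that $\dot x_i \ge 0$ whenever the remaining coordinates are nonnegative. Reading off (\ref{EQ:TheModel}), when $E=0$ the derivative $\dot E = \beta_1 S(I_u+\nu I_n)/N + \beta_2 SV \ge 0$; when $I_u = 0$, $\dot I_u = (1-p)\gamma E \ge 0$; when $I_n=0$, $\dot I_n = p\gamma E \ge 0$; when $I_h=0$, $\dot I_h = \eta_u I_u+\eta_n I_n \ge 0$; when $R=0$, $\dot R = \sigma_u I_u+\sigma_n I_n+\sigma_h I_h \ge 0$; and when $V=0$, $\dot V = \alpha_u I_u+\alpha_n I_n \ge 0$. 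For $S$ I would rewrite the first equation along a solution as $\dot S = \Pi + \theta R - \lambda(t) S$ with the nonnegative time-dependent coefficient $\lambda(t) = \beta_1(I_u+\nu I_n)/N + \beta_2 V + \mu$, and invoke the variation of constants formula to conclude $S(t) \ge S(0)\exp\bigl(-\int_0^t \lambda(s)\,ds\bigr) > 0$. This is the one subtle point in the proof: the standard-incidence term $\beta_1 S(I_u+\nu I_n)/N$ is a priori singular at $N=0$, but since $N \ge S$ and $S$ remains strictly positive along the solution, the vector field is well defined and a standard bootstrap extends the maximal solution globally in forward time.

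For boundedness I would sum the first six equations to obtain
\begin{equation*}
\dot N = \Pi - \mu N - \delta_n I_n - \delta_h I_h \le \Pi - \mu N.
\end{equation*}
The scalar comparison principle applied to the linear majorant then yields $N(t) \le N(0)e^{-\mu t} + (\Pi/\mu)(1-e^{-\mu t})$, so $N(t) \le \max\{N(0),\Pi/\mu\}$ and $\limsup_{t\to\infty} N(t) \le \Pi/\mu$. In particular $I_u, I_n \le N$ along the flow, so $\alpha_u I_u+\alpha_n I_n \le (\alpha_u+\alpha_n)\Pi/\mu$ eventually; substituting into the virus equation gives $\dot V \le (\alpha_u+\alpha_n)\Pi/\mu - \mu_c V$, and a second comparison yields $\limsup_{t\to\infty} V(t) \le \Pi(\alpha_u+\alpha_n)/(\mu\mu_c)$. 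Together these inequalities identify $\Omega$ as an absorbing set for the dynamics.

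The argument is essentially routine; the only mild obstacle is the singularity of the incidence term at $N=0$, which is resolved once the variation-of-constants estimate forces $S(t)>0$, hence $N(t)>0$, for all $t\ge 0$.
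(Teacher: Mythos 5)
Your proof is correct and follows essentially the same route as the paper's: a face-by-face tangency check giving nonnegativity of $E$, $I_u$, $I_n$, $I_h$, $R$, $V$, then summing the six human compartments to get $\od{N}{t}\le \Pi-\mu N$ and applying the comparison principle, and finally a second comparison for $V$. The only divergence is the positivity of $S$: the paper uses an informal barrier argument (at $S=0$ one has $\od{S}{t}=\Pi+\theta R>0$, so $S$ cannot reach zero), whereas your variation-of-constants bound $S(t)\ge S(0)\exp\left(-\int_0^t\lambda(s)\,ds\right)>0$ is a cleaner, more rigorous version of the same conclusion, and your explicit remark that positivity of $S$ (hence of $N$) resolves the apparent singularity of the standard-incidence term at $N=0$ addresses a point the paper passes over silently.
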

\begin{proof}
	Consider the initial conditions (\ref{model_IC}). Suppose that at time $t=t_1$, $S(t_1)=0$. Then from (\ref{EQ:TheModel}) at $t=t_1$, $\od{S}{t}=\Pi+\theta R>0$ which implies that $\od{S}{t}>0$ when $S$ is positive and small. Thus, there is no time $t_1$ such that $S(t_1)=0$. Therefore, $S$ stays positive for $t>0$ with the initial condition $S(0)>0$. Now, for the other components 
	\begin{align*}
		\evalat{\od{E}{t}}{E=0}&=\frac{\beta_1 S( I_u+ \nu I_n)}{N} + \beta_2 SV\geq0, \quad
		&\evalat{\od{I_u}{t}}{I_u=0}&= (1-p) \gamma E\geq0, \\
		\evalat{\od{I_n}{t}}{I_n=0} &=  p \gamma E\geq0, \quad
		&\evalat{\od{I_h}{t}}{I_h=0} &= \eta_u I_u+ \eta_n I_n,\geq0, \\
		\evalat{\od{R}{t}}{R=0} &= \sigma_u I_u+\sigma_n I_n + \sigma_h I_h,\geq0, \quad
		& \evalat{\od{V}{t}}{V=0}&=\alpha_u I_u+ \alpha_n I_n\geq0.
	\end{align*}
	Thus, all the other components are nonnegative and the orthant $\mathbb{R}_+^7$ is invariant under (\ref{EQ:TheModel}).\\
	
	To show the boundedness of solutions of the system (\ref{EQ:TheModel}), we add all the equations except the last one of (\ref{EQ:TheModel}) to get
	\begin{align*}
		\od{N}{t}&=\od{}{t}(S+E+I_u+I_n+I_h+R)\\
		&=\Pi-\mu N-\delta_nI_n-\delta_hI_h\\
		&\leq  \Pi-\mu N.
	\end{align*}
	Using the results of differential inequality, we have
	
	$$\limsup_{t\to\infty}N(t)\leq\frac{\Pi}{\mu}.$$
	Since the total population is bounded, so the individual components are also bounded. \\
	Also, from the last equation of (\ref{EQ:TheModel}), the rate of change of free virus is given by
	\begin{align*}
		\od{V}{t} &= \alpha_u I_u+ \alpha_n I_n - \mu_c V\\
		&< (\alpha_u+ \alpha_n)\frac{\Pi}{\mu}- \mu_c V
	\end{align*}
	Again, using the results of differential inequality, we have
	$$\limsup_{t\to\infty}V(t)<\frac{\Pi(\alpha_u+ \alpha_n)}{\mu\mu_c}.$$
	This completes the proof.
\end{proof}

\subsection{Disease-free equilibrium and basic reproduction number}
As of many epidemiological models, our model (\ref{EQ:TheModel}) also exhibit the 
disease-free equilibrium (DFE). At this equilibrium, population persists in the absence of disease. Mathematically, DFE is obtained by assuming all the infected compartments to be zero, i.e., $E, I_u, I_n, I_h$, and $V$ to be zero. Then from the model, $S=\frac{\Pi}{\mu}$ and $R=0$. Thus, the DFE is given by $\mathscr{E}_s=(\frac{\Pi}{\mu},0,0,0,0,0,0)$.

The basic reproduction number $\mathscr{R}_0$ is interpreted as the average number of secondary cases produced by one infected individual introduced into a population of susceptible individuals. This dimensionless number is a measure of the potential for disease outbreak. Now to obtain $\mathscr{R}_0$, we use next-generation operator method \cite{diekmann1990definition,van2002reproduction}.
%basic reproduction number is interpret as the expected number of secondary infections produced by an index case in a completely susceptible population

First we assemble the compartments which are infected (i.e., $E$, $I_u$, $I_n$, $I_h$, and $V$) and  decomposing the right hand side of the system (\ref{EQ:TheModel}) as $\mathscr{F}-\mathscr{W}$, where $\mathscr{F}$ is the transmission part, expressing the production of new infection, and the transition part is $\mathscr{W}$, which
describe the rate of transfer of individuals from one compartment to another. Then we have for the model (\ref{EQ:TheModel})
$$\mathscr{F}=\begin{pmatrix}
	\frac{\beta_1 S( I_u+ \nu I_n)}{N}+ \beta_2 SV\\
	0\\
	0\\
	0\\
	0
\end{pmatrix}\quad \text{and}\quad
\mathscr{W}=\begin{pmatrix}
	(\gamma + \mu) E\\
	-(1-p) \gamma E+\phi_uI_u\\
	-p\gamma E+\phi_n I_n \\
	-\eta_u I_u- \eta_n I_n +\phi_hI_h \\
	-\alpha_u I_u-\alpha_n I_n+ \mu_c V 
\end{pmatrix},$$
with $\phi_u=\eta_u+\sigma_u+\mu$,  $\phi_n=\eta_n+\sigma_n+\mu+\delta_n$, and $\phi_h=\sigma_h+\mu+\delta_h$.\\
Consider $X=(E, I_u, I_n, I_h, V)$.
Then the derivatives of $\mathscr{F}$ and $\mathscr{W}$ at the DFE $\mathscr{E}_s$ are given by
$$F=\pd{\mathscr{F}}{X}=\begin{pmatrix}
	0 &  \beta_1 &  \beta_1\nu & 0 &\frac{\beta_2\Pi}{\mu}\\
	0 &   0      &   0         & 0  & 0\\
	0 &   0      &   0         & 0  & 0\\
	0 &   0      &   0         & 0  & 0\\
	0 &   0      &   0         & 0  & 0
\end{pmatrix}\quad \text{and}\quad
W=\pd{\mathscr{W}}{X}=\begin{pmatrix}
	(\gamma + \mu) &   0         &   0         & 0    & 0\\
	-(1-p) \gamma  &  \phi_u     &   0         & 0    & 0\\
	-p\gamma       &  0          &  \phi_n     & 0    & 0 \\
	0              &  -\eta_u    &  -\eta_n    &\phi_h& 0\\
	0              & -\alpha_u   &  -\alpha_n  & 0    &\mu_c 
\end{pmatrix}.$$
Now, we calculate the next-generation matrix $FW^{-1}$, whose $(i,j)$-th entry describe the expected number of new infections in compartment $i$ produced by the infected individual originally introduced into compartment $j$. According to Diekmann et al. \cite{diekmann1990definition}, the basic reproduction number is given by $\mathscr{R}_0=\rho(FW^{-1})$, where $\rho$ is the spectral radius of the next-generation matrix $FW^{-1}$. A simple calculation leads to the following expression for $\mathscr{R}_0$:
\begin{equation}
	\mathscr{R}_0=\frac{(1-p)\gamma\beta_1}{\phi_u( \gamma+\mu)}+\frac{p\gamma\nu\beta_1}{\phi_n( \gamma+\mu)}+\frac{(1-p)\gamma\alpha_u\beta_2(\Pi/\mu)}{\mu_c \phi_u( \gamma+\mu)}+\frac{p\gamma\alpha_n\beta_2(\Pi/\mu)}{\mu_c\phi_n( \gamma+\mu)}.\label{EQ:basic_reproduction}
\end{equation}
Epidemiologically, the first and second term of  $\mathscr{R}_0$ can be interpreted as the number of secondary infections that one un-notified and notified individual will produce in a completely susceptible population during its infectious period, respectively. Here, the infection occurs due to direct contact. Also, the last two term  represent the same but for free virus contact. Thus, both transmission pathways have impacts on the dynamics of COVID-19 spread in the community.

\subsection{Stability of disease-free equilibrium}\label{stability_DFE}
Following theorem gives the local stability properties of the DFE $\mathscr{E}_s=(\frac{\Pi}{\mu},0,0,0,0,0,0)$. 
\begin{theorem}
	The disease-free equilibrium (DFE) $\mathscr{E}_s=(\frac{\Pi}{\mu},0,0,0,0,0,0)$ of the system (\ref{EQ:TheModel}) is locally asymptotically stable if $\mathscr{R}_0<1$ and unstable if $\mathscr{R}_0>1$.
\end{theorem}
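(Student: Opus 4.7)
The plan is to invoke Theorem 2 of van den Driessche and Watmough (2002), whose conclusion is precisely the required dichotomy: the DFE is locally asymptotically stable when $\rho(FW^{-1})=\mathscr{R}_0<1$ and unstable when $\mathscr{R}_0>1$, provided the decomposition $\mathscr{F}-\mathscr{W}$ used to define $\mathscr{R}_0$ satisfies the five standing hypotheses (A1)--(A5) of that paper. Since the excerpt has already laid out this decomposition and exhibited $F$ and $W$, the bulk of the argument reduces to a routine verification of those hypotheses plus a short remark on the uninfected block.

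First I would check (A1)--(A4) by direct inspection of the displayed $\mathscr{F}$ and $\mathscr{W}$. The vector $\mathscr{F}$ is componentwise non-negative and supported only on the $E$-equation, where it records exactly the new-infection terms $\tfrac{\beta_1 S(I_u+\nu I_n)}{N}+\beta_2 SV$. The vector $\mathscr{W}$ splits as $\mathscr{W}^+-\mathscr{W}^-$ with both parts non-negative, and each component of $\mathscr{W}^-$ vanishes whenever the corresponding state variable is zero (so no compartment can become negative through $\mathscr{W}$). Moreover, $\mathscr{F}$ vanishes on the disease-free manifold $\{E=I_u=I_n=I_h=V=0\}$, and $\mathscr{F}_i=0$ for every non-infected compartment. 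For (A5) I would note that the matrix $W$ written in the excerpt is lower-triangular, so its eigenvalues are its diagonal entries $\gamma+\mu$, $\phi_u$, $\phi_n$, $\phi_h$, $\mu_c$, all strictly positive; equivalently, the DFE is a locally asymptotically stable equilibrium of the linear subsystem $\dot X=-WX$.

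Second, I would address the non-infected compartments. After reordering the coordinates as $(E,I_u,I_n,I_h,V,S,R)$, the Jacobian of (\ref{EQ:TheModel}) at $\mathscr{E}_s$ is block upper-triangular: every partial derivative coupling the $(S,R)$-block into the infected block carries at the DFE a factor of an infected variable and hence vanishes. The lower-right $(S,R)$-block is itself triangular with diagonal entries $-\mu$ and $-(\theta+\mu)$, both strictly negative. Therefore the spectrum of the full Jacobian at $\mathscr{E}_s$ is the union $\{-\mu,-(\theta+\mu)\}\cup\operatorname{spec}(F-W)$, and local asymptotic stability (respectively instability) of $\mathscr{E}_s$ is equivalent to that of $F-W$, which is exactly what the cited theorem controls through the sign of $\mathscr{R}_0-1$.

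I do not foresee a genuine obstacle: the argument is a bookkeeping verification plus a direct citation. The only point to be careful about is that the shedding column of $F$ contains the entry $\beta_2\Pi/\mu$ rather than $\beta_2$, because $V$ interacts with $S$ through mass action rather than standard incidence; this is already reflected correctly in the displayed $F$ and in the expression (\ref{EQ:basic_reproduction}) for $\mathscr{R}_0$, so nothing additional is needed.
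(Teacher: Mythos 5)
Your proposal is correct, but it takes a genuinely different route from the paper. You outsource the whole dichotomy to Theorem~2 of van den Driessche and Watmough, verifying hypotheses (A1)--(A5) for the displayed splitting $\mathscr{F}-\mathscr{W}$ and then observing that the Jacobian at $\mathscr{E}_s$ is block triangular with a stable uninfected $(S,R)$-block, so that stability of the DFE is governed entirely by $F-W$ and hence by the sign of $\mathscr{R}_0-1$. All of the checks you list do go through: $W$ is lower triangular with positive diagonal $(\gamma+\mu,\phi_u,\phi_n,\phi_h,\mu_c)$, the cross-derivatives from $(S,R)$ into the infected equations vanish at the DFE because each new-infection term carries a factor of $I_u$, $I_n$ or $V$, and the $(S,R)$-block is triangular with eigenvalues $-\mu$ and $-(\theta+\mu)$. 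The paper instead gives a self-contained spectral computation: it reads off the three obvious eigenvalues $-\mu$, $-(\theta+\mu)$, $-\phi_h$ from $J_{\mathscr{E}_s}$, reduces to a $4\times 4$ block whose characteristic equation it rewrites as $H(\lambda)=1$ with $H$ as in (\ref{h_lambda}), notes $H(0)=\mathscr{R}_0$, and uses the modulus bound $|H(\lambda)|\le H(0)$ for $\operatorname{Re}\lambda\ge 0$ to exclude unstable roots when $\mathscr{R}_0<1$, together with $H(0)>1$ and $H(\lambda)\to 0$ to produce a positive real eigenvalue when $\mathscr{R}_0>1$. Your version is shorter and leans on a standard general theorem the paper already cites to define $\mathscr{R}_0$; the paper's version is elementary and explicit, and in particular exhibits the destabilizing positive eigenvalue directly rather than inferring instability from the cited result. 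Either argument is acceptable; if you adopt yours, state explicitly that you are verifying (A1)--(A5) and include the uninfected-block computation, since (A5) concerns the full Jacobian with $\mathscr{F}$ set to zero, not only the infected block.
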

\begin{proof}
	The Jacobian matrix at the DFE $\mathscr{E}_s$ is given by
	\begin{equation}
		J_{\mathscr{E}_s}= \left[ \begin {array}{ccccccc} 
		-\mu &    0        &-\beta_1    &-\nu\beta_1 & 0       & \theta     & -\frac {\Pi\beta_2}{\mu}\\
		0   & -(\gamma+\mu)& \beta_1    & \nu\beta_1 & 0       & 0          & \frac{\Pi\beta_2}{\mu}\\
		0    & (1-p)\gamma &-\phi_u     &  0         & 0       & 0          & 0\\
		0    & p\gamma     & 0          & -\phi_n    & 0       & 0          & 0\\
		0    & 0           &\eta_u      & \eta_n     &-\phi_h  & 0          & 0\\
		0    & 0           &\sigma_u    &\sigma_n    &\sigma_h &-(\theta+\mu) & 0 \\
		0    &0            &\alpha_u    &\alpha_n    &  0      &0           &-\mu_c
		\end {array} \right],\label{Jacobian_DFE}
	\end{equation}
	with $\phi_u=\eta_u+\sigma_u+\mu$, $\phi_n=\eta_n+\sigma_n+\mu+\delta_n$, and $\phi_h=\eta_h+\mu+\delta_h$.\\
	The above Jacobian matrix possess three obvious eigenvalues $-\mu$, $-(\theta+\mu)$ and $-\phi_h=-(\sigma_h+\mu+\delta_h)$, which are all negative. Also, the remaining eigenvalues ($\lambda$) are the roots of the equation $\det(M-\lambda I)=0$, where $I$ is the $4\times 4$ identity matrix and
	$$M= \left[ \begin {array}{ccccccc} 
	-(\gamma+\mu)  & \beta_1               & \nu\beta_1                      & \frac {\Pi\beta_2}{\mu}                     \\
	(1-p)\gamma   &-\phi_u &  0                              & 0                       \\
	p\gamma      & 0                     & -\phi_n & 0                       \\
	0             &\alpha_u                 & \alpha_n                          &-\mu_c  \\
	\end {array} \right].$$
	Now, $\det(M-\lambda I)=0$ gives 
	\begin{align*}
		&(\lambda+\mu_c)(\lambda+\phi_n)(\lambda+\phi_u)(\lambda+\gamma+\mu)-(1-p)\gamma\beta_1(\lambda+\mu_c)(\lambda+\phi_n)\\
		&-p\gamma\nu\beta_1(\lambda+\mu_c)(\lambda+\phi_u)
		-(1-p)\gamma\alpha_u\frac{\Pi\beta_2}{\mu}(\lambda+\phi_n)-\frac {\Pi\beta_2}{\mu}p\gamma\alpha_n(\lambda+\phi_u)=0. \end{align*}
	The above equation can be written as $H(\lambda)=1$, where
	\begin{align}
		\begin{split}
			H(\lambda)=&\frac{(1-p)\gamma\beta_1}{(\lambda+\phi_u)(\lambda+\gamma+\mu)}+\frac{p\gamma\nu\beta_1}{(\lambda+\phi_n)(\lambda+\gamma+\mu)}+\frac{(1-p)\gamma\alpha_u\beta_2(\Pi/\mu)}{(\lambda+\mu_c)(\lambda+\phi_u)(\lambda+\gamma+\mu)}\\
			&+\frac{p\gamma\alpha_n\beta_2(\Pi/\mu)}{(\lambda+\mu_c)(\lambda+\phi_n)(\lambda+\gamma+\mu)}.\label{h_lambda}
		\end{split}
	\end{align}
	We also rewrite $H(\lambda)$ as $$H(\lambda)=H_1(\lambda)+H_2(\lambda)+H_3(\lambda)+H_4(\lambda),$$
	where $H_j(\lambda)$, $j=1,2,3,4$, are the respective term of (\ref{h_lambda}).
	Now, if $Re(\lambda)\geq0$ with $\lambda=x+iy$, then
	$$|H_1(\lambda)|=\frac{(1-p)\gamma\beta_1}{|\lambda+\phi_u||\lambda+\gamma+\mu|}\leq H_1(x)\leq H_1(0).$$
	Analogously, $$|H_j(\lambda)|\leq H_j(x)\leq H_j(0),\quad \text{for}~j=2,3,4.$$
	Then 
	\begin{align*}
		|H(\lambda)|\leq& |H_1(\lambda)|+|H_2(\lambda)|+|H_3(\lambda)|+|H_4(\lambda)|\\
		\leq &H_1(0)+H_2(0)+H_3(0)+H_4(0)\\
		= &H(0).
	\end{align*}
	From (\ref{EQ:basic_reproduction}) and (\ref{h_lambda}), we have $H(0)=\mathscr{R}_0$. Then $\mathscr{R}_0<1$ $\implies$ $|H(\lambda)|<1$, which implies there does not exist any solutions of $H(\lambda)= 1$ with  $Re(\lambda)\geq0$.\\
	
	Therefore if $\mathscr{R}_0<1$, all the eigenvalues of $H(\lambda)= 1$ have negative real parts and hence the DFE $\mathscr{E}_s$ of the system (\ref{EQ:TheModel}) is locally asymptotically stable.\\
	Now for the case of $\mathscr{R}_0>1$, i.e., $H(0)>1$ 
	$$\lim_{\lambda\to\infty}H(\lambda)=0.$$
	Then there exists $\lambda^*>0$ such that $H(\lambda^*)=1$, which confirm the existence of positive eigenvalue of the Jacobian matrix $J_{\mathscr{E}_s}$.
	Thus the DFE $\mathscr{E}_s$ is unstable for $\mathscr{R}_0>1$.
\end{proof}

Moreover, when $\mathscr{R}_0<1$ the DFE $\mathscr{E}_s$ is globally asymptotically stable, which can be assured by the following theorem.
\begin{theorem}
	The disease-free equilibrium (DFE) $\mathscr{E}_s=(\frac{\Pi}{\mu},0,0,0,0,0,0)$ of the system (\ref{EQ:TheModel}) is globally asymptotically stable if $\mathscr{R}_0<1$.
\end{theorem}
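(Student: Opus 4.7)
The plan is to construct a linear Lyapunov function on the infected variables and close the argument with LaSalle's invariance principle. Inspired by the standard construction via the left Perron eigenvector of the next-generation matrix, my candidate is
\[
  L \;=\; a_1 E + a_2 I_u + a_3 I_n + a_5 V,
\]
with positive weights to be fixed below, and no contribution from $I_h$ since the hospitalized class generates no new infections. Two bounds available on the positively invariant region $\Omega$ of the preceding theorem are the workhorses: $S/N\le 1$ and $S\le \Pi/\mu$. Using them to dominate $\beta_1 S(I_u+\nu I_n)/N$ by $\beta_1(I_u+\nu I_n)$ and $\beta_2 S V$ by $\beta_2(\Pi/\mu)V$, differentiation of $L$ along (\ref{EQ:TheModel}) reduces $\dot L$ to a linear form $c_E E + c_{I_u}I_u + c_{I_n}I_n + c_V V$ whose coefficients are explicit linear expressions in the $a_i$.

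Next I would tune the weights to make the three cross-coefficients vanish. The natural prescription
\[
  a_5 \;=\; \frac{a_1\beta_2\Pi}{\mu\,\mu_c},\qquad a_2 \;=\; \frac{a_1\beta_1 + a_5\alpha_u}{\phi_u},\qquad a_3 \;=\; \frac{a_1\beta_1\nu + a_5\alpha_n}{\phi_n},
\]
zeroes out $c_{I_u}$, $c_{I_n}$ and $c_V$ by construction, and substituting these back into $c_E$ and comparing with the explicit formula (\ref{EQ:basic_reproduction}) for $\mathscr{R}_0$ should produce the crisp inequality
\[
  \dot L \;\le\; a_1(\gamma+\mu)\bigl(\mathscr{R}_0-1\bigr)\,E,
\]
which is nonpositive whenever $\mathscr{R}_0<1$. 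This step is where essentially all of the algebra sits; conceptually it is forced, since up to a common scalar the weights $a_i$ are precisely a left eigenvector of $FW^{-1}$ at eigenvalue $\mathscr{R}_0$.

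Finally I would invoke LaSalle's invariance principle on $\Omega$. On the largest invariant subset of $\{\dot L = 0\}$ one has $E \equiv 0$; then $\dot E = 0$ combined with $S>0$ forces $I_u = I_n = V = 0$. The remaining equations decouple linearly: $\dot I_h = -\phi_h I_h$ gives $I_h = 0$, $\dot R = -(\theta+\mu) R$ gives $R = 0$, and $\dot S = \Pi - \mu S$ gives $S = \Pi/\mu$, so the only invariant point is $\mathscr{E}_s$, establishing global asymptotic stability. The main obstacle I expect is the bookkeeping that collapses $c_E$ into $a_1(\gamma+\mu)(\mathscr{R}_0-1)$ after the weight substitution; a smaller but genuine subtlety is that the two estimates $S/N\le 1$ and $S\le \Pi/\mu$ require trajectories to lie in $\Omega$, which the previous boundedness theorem guarantees only asymptotically, so one formally argues on the absorbing set.
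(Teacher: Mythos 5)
Your proposal is correct, and the algebra does close: with $a_5=a_1\beta_2\Pi/(\mu\mu_c)$, $a_2=(a_1\beta_1+a_5\alpha_u)/\phi_u$, $a_3=(a_1\beta_1\nu+a_5\alpha_n)/\phi_n$, the coefficient of $E$ collapses to $a_1(\gamma+\mu)(\mathscr{R}_0-1)$ exactly as you predict, and the LaSalle step goes through since $S>0$ forces $I_u=I_n=V=0$ on any invariant subset of $\{E=0\}$. However, your route is genuinely different from the paper's. The paper does not build a Lyapunov function at all; it invokes the Castillo-Chavez framework, splitting the system into uninfected variables $\mathbb{X}=(S,R)$ and infected variables $\mathbb{Z}=(E,I_u,I_n,I_h,V)$, verifying (H1) that $\mathbb{X}^*=(\Pi/\mu,0)$ is globally asymptotically stable for $\od{\mathbb{X}}{t}=P(\mathbb{X},\mathbf{0})$, and (H2) that $Q(\mathbb{X},\mathbb{Z})=B\mathbb{Z}-\widetilde{Q}(\mathbb{X},\mathbb{Z})$ with $B$ Metzler and $\widetilde{Q}\geq 0$ on $\Omega$ --- where $\widetilde{Q}\geq0$ amounts to precisely the two bounds you use, $S/N\leq 1$ and $S\leq\Pi/\mu$. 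What the paper's approach buys is brevity: no weights to tune and no bookkeeping, at the cost of outsourcing the real work (and the link to $\mathscr{R}_0<1$, which enters only through the already-proved local stability) to a cited theorem. What your approach buys is a self-contained argument that exhibits the decay rate explicitly and makes transparent why the threshold is $\mathscr{R}_0=1$, via the left-eigenvector interpretation of the weights. Your closing caveat about arguing on the absorbing set is well taken, and applies equally to the paper's proof, since its condition $\widetilde{Q}\geq0$ is likewise only guaranteed on $\Omega$; note also that $S/N\leq1$ actually holds throughout the nonnegative orthant, so only the bound $S\leq\Pi/\mu$ genuinely needs $\Omega$.
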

\begin{proof}
	First we rewrite the model (\ref{EQ:TheModel}) into the form 
	\begin{eqnarray}
		\begin{cases}
			\begin{array}{lll}
				\displaystyle{\od{\mathbb{X}}{t}} &=& P(\mathbb{X},\mathbb{Z}), \\
				\displaystyle{\od{\mathbb{Z}}{t}} &=& Q(\mathbb{X},\mathbb{Z})
			\end{array}
			\label{EQ:global_DFE}
		\end{cases}
	\end{eqnarray}
	where $\mathbb{X}^T=(S,R)\in\mathbb{R}^2_+$ with $S>0$, describe the uninfected compartments and $\mathbb{Z}^T=(E,I_u,I_n,I_h,V)\in\mathbb{R}^5_+$ describe the infected compartments with free virus. Also, $T$ denoting the transpose of the matrix. Note that $Q(\mathbb{X},\mathbf{0})=\mathbf{0}$, where $\mathbf{0}$ is a zero vector.
	
	Following Castillo-Chavez \cite{castillo2002computation}, the DFE $\mathscr{E}_s$ of the system (\ref{EQ:global_DFE}) is globally asymptotically stable if it is locally asymptotically stable and satisfy following two conditions:
	\begin{enumerate}[(H1)]
		\item For the subsystem $\od{\mathbb{X}}{t}=P(\mathbb{X},\mathbf{0})$, the equilibrium $\mathbb{X}^*$ is globally asymptotically stable,
		\item $Q(\mathbb{X},\mathbb{Z})=B\mathbb{Z}-\widetilde{Q}(\mathbb{X},\mathbb{Z})$, $\widetilde{Q}(\mathbb{X},\mathbb{Z})\geq 0$ for $(\mathbb{X},\mathbb{Z})\in\Omega$, 
	\end{enumerate}
	where $B=\evalat{\frac{\partial Q}{\partial \mathbb{Z}}}{(\mathbb{X},\mathbf{Z})=(\mathbb{X}^*,\mathbf{0})}$ is a Metzler matrix (a matrix whose off diagonal elements are nonnegative) and $\Omega$ is the positive invariant set for the model (\ref{EQ:TheModel}) as described in Subsection \ref{subsec:basic}.\\
	Now, for the model (\ref{EQ:TheModel}),
	$$P(\mathbb{X},\mathbf{0})=\begin{pmatrix}
		\Pi-\mu S\\0
	\end{pmatrix}.$$
	Since all the infected compartments are zero, so there is no infection, and thus, no recovery. That's why, we consider $R=0$. Clearly, $\mathbb{X}^*=(\frac{\Pi}{\mu},0)$ is a globally asymptotically equilibrium for the system  $\od{\mathbb{X}}{t}=P(\mathbb{X},\mathbf{0})$. So the condition (H1) satisfied.
	Now, 
	$$B=\begin{pmatrix}
		-(\gamma+\mu)  & \beta_1  &  \beta_1\nu  &  0 & \frac{\beta_2\Pi}{\mu}\\
		(1-p)\gamma & -\phi_u &  0 & 0  & 0\\
		p\gamma     &  0 & -\phi_n & 0  & 0\\
		0 & \eta_u &  \eta_n & -\phi_h  & 0\\
		0 & \alpha_u &  \alpha_n & 0  & -\mu_c\\
	\end{pmatrix}\quad \text{and}$$
	$$\widetilde{Q}(\mathbb{X},\mathbb{Z})=\begin{pmatrix}
		\beta_1(1-\frac{S}{N})I_u+\beta_1\nu (1-\frac{S}{N})I_n+\beta_2(\frac{\Pi}{\mu}-S)V\\
		0\\
		0\\
		0\\
		0
	\end{pmatrix},$$
	with $\phi_u=\eta_u+\sigma_u+\mu$, $\phi_n=\eta_n+\sigma_n+\mu+\delta_n$, and $\phi_h=\eta_h+\mu+\delta_h$.\\
	The matrix $B$ is a Metzler matrix and $\widetilde{Q}(\mathbb{X},\mathbb{Z})\geq0$ whenever the state variables are inside the invariant set $\Omega$. Thus (H2) also satisfied. This completes the proof.
\end{proof}

\subsection{Existence of endemic equilibrium}
Here, we discuss the existence of endemic equilibrium of the model (\ref{EQ:TheModel}). The endemic equilibrium $\mathscr{E}^*=(S^*,E^*,I_u^*,I_n^*,I_h^*,R^*,V^*)$ is given by
$I_u^*=A_1E^*$, $I_n=A_2E^*$, $I_h^*=A_3E^*$, $R=A_4E^*$, $V^*=A_5E^*$,  $S^*=\frac{\Pi+(\theta A_4-(\gamma+\mu))E^*}{\mu}=\frac{(\gamma+\mu)(\Pi-A_7E^*)}{A_6+\beta_2A_5(\Pi-A_7E^*)}$, 
where all the $A_i$, $i=1,2,\cdots,7$ are positive and given by
\begin{align*}
	A_1&=\frac{(1-p)\gamma}{\eta_u+\sigma_u+\mu},\; & A_2&=\frac{p\gamma}{\eta_n+\sigma_n+\mu+\delta_n},\; &
	A_3&=\frac{\eta_u A_1+\eta_n A_2}{\sigma_h+\mu+\delta_h},\\
	A_4&=\frac{\sigma_u A_1+\sigma_n A_2+\sigma_h A_3}{\theta+\mu},\;& 
	A_5&=\frac{\alpha_u A_1+\alpha_n A_2}{\mu_c},\; &
	A_6&=\mu\beta_1(A_1+\nu A_2),\;\\ A_7&=\delta_n A_2+\delta_h A_3,
\end{align*}
and $E^*$ is a positive root of the quadratic equation 
\begin{equation}
	\rho_0E^2+\rho_1E+\rho_2=0\label{root_endemic}
\end{equation}
with
\begin{align*} \rho_0&=-\beta_2 A_5 A_7(\theta A_4-(\gamma+\mu)),\\
	\rho_1&=(A_6+\beta_2 A_5 \Pi)(\theta A_4-(\gamma+\mu))-\beta_2 A_5 A_7\Pi+\mu(\gamma+\mu )A_7\\
	&=\mu(\gamma+\mu)\mathscr{R}_0(\theta A_4-(\gamma+\mu))+\mu(\gamma+\mu)(1-\mathscr{R}_0)A_7+A_6A_7,\\
	\rho_2&=(A_6+\beta_2 A_5 \Pi)\Pi-\mu(\gamma+\mu)\Pi
	=\mu(\gamma+\mu)\Pi(\mathscr{R}_0-1).
\end{align*}
For the existence of real roots of (\ref{root_endemic}), we must have $\rho_1^2-4\rho_0\rho_2\geq0$. Without loss of generality, assume that $E_1^*>E_2^*$, when equation (\ref{root_endemic}) have two real roots.
From the expression of $S^*$, $S^*=\frac{\Pi}{\mu}+\frac{(\theta A_4-(\gamma+\mu))}{\mu}E^*$.  Since the upper bound of total population is $\Pi/\mu$, so we must have $\theta A_4-(\gamma+\mu)< 0$. Then $\rho_0$ is always positive. Note that, for  $\theta A_4-(\gamma+\mu)= 0$, $S^*=\frac{\Pi}{\mu}$ and thus the endemic equilibrium becomes DFE. Also, $\rho_2<(>)0\iff\mathscr{R}_0<(>)1$. At the equilibrium density, $N^*=\frac{\Pi-A_7E^*}{\mu}$. Thus, $N^*>0\iff \Pi-A_7E^*>0$.
%Now, depending on the number of positive solution of (\ref{root_endemic}), we consider the following cases.
Keeping all the conditions in mind, in the next theorem, we state about the existence and possible number of endemic equilibrium points.
\begin{theorem}
	When $\mathscr{R}_0<1$, the model (\ref{EQ:TheModel}) has unique endemic equilibrium if $\Pi-A_7E^*>0$. When $\mathscr{R}_0>1$, the model (\ref{EQ:TheModel}) has
	\begin{enumerate}
		\item two endemic equilibrium if $\rho_1<0$,  $\rho_1^2-4\rho_0\rho_2>0$, and $\Pi-A_7E_i^*>0$, $i=1,2$.
		\item unique endemic equilibrium if any of the following cases holds
		\begin{enumerate}
			\item $\rho_1<0$,  $\rho_1^2-4\rho_0\rho_2>0$, and $E_1^*>\frac{\Pi}{A_7}>E_2^*$,
			\item $\rho_1<0$,  $\rho_1^2-4\rho_0\rho_2=0$, $\Pi-A_7E^*>0$,
		\end{enumerate} 
		\item no endemic equilibrium otherwise.
	\end{enumerate} 
\end{theorem}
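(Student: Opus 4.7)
The plan is to reduce the question to counting positive real roots of the scalar quadratic
$q(E) = \rho_0 E^2 + \rho_1 E + \rho_2 = 0$ that also lie in the feasibility interval $\bigl(0, \Pi/A_7\bigr)$, and then perform elementary sign analysis case by case. The first step is to verify that once $E^* \in \bigl(0, \Pi/A_7\bigr)$ is given, all remaining components of $\mathscr{E}^*$ are automatically positive, so that an admissible endemic equilibrium exists if and only if $q$ has a root in that interval. For $I_u^*, I_n^*, I_h^*, R^*, V^*$ this is immediate from the relations $I_u^* = A_1 E^*$, etc., because each $A_i > 0$. For $S^*$ I would use the second form $S^* = (\gamma+\mu)(\Pi - A_7 E^*)/\bigl(A_6 + \beta_2 A_5(\Pi - A_7 E^*)\bigr)$, where the hypothesis $\Pi - A_7 E^* > 0$ makes both numerator and denominator strictly positive.

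Next I would record the two sign facts that were essentially proved immediately before the theorem statement: $\rho_0 > 0$ (since $\theta A_4 - (\gamma+\mu) < 0$ is forced by the population bound $N^* \le \Pi/\mu$), and $\mathrm{sign}(\rho_2) = \mathrm{sign}(\mathscr{R}_0 - 1)$ from the rewriting $\rho_2 = \mu(\gamma+\mu)\Pi(\mathscr{R}_0 - 1)$. With $\rho_0$ fixed positive, the standard facts sum of roots $= -\rho_1/\rho_0$ and product of roots $= \rho_2/\rho_0$ govern everything that follows.

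For the case $\mathscr{R}_0 < 1$, we have $\rho_2 < 0$, so $\rho_0 \rho_2 < 0$ and the discriminant $\rho_1^2 - 4\rho_0\rho_2 > \rho_1^2 \ge 0$ is automatically positive; the product of roots is negative, hence exactly one root is positive. That unique positive root then yields a genuine endemic equilibrium precisely when it satisfies $\Pi - A_7 E^* > 0$. For $\mathscr{R}_0 > 1$, we have $\rho_2 > 0$, so both roots (if real) have the same sign, and that sign is positive iff the sum $-\rho_1/\rho_0$ is positive, i.e.\ $\rho_1 < 0$. Splitting on the discriminant and on whether both, one, or neither positive root lies in $(0, \Pi/A_7)$ then produces the three sub-cases: two endemic equilibria when $\rho_1 < 0$, $\rho_1^2 - 4\rho_0\rho_2 > 0$, and both roots are feasible; a unique endemic equilibrium either when the larger root leaves the feasible interval ($E_1^* > \Pi/A_7 > E_2^*$) or when the discriminant vanishes, giving the double root $-\rho_1/(2\rho_0)$ which must still satisfy the feasibility bound; and no endemic equilibrium in all remaining configurations (discriminant negative, $\rho_1 \ge 0$, or both roots infeasible).

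I expect the main obstacle to be bookkeeping rather than a genuinely hard analytical step: making sure that each sub-case in the theorem really corresponds to exactly one configuration of (sign of $\rho_1$, sign of the discriminant, position of roots relative to $\Pi/A_7$), and that no admissible configuration is silently excluded by the blanket clause \emph{``otherwise''}. A secondary subtlety is checking that the feasibility condition $\Pi - A_7 E^* > 0$ used in the theorem is genuinely the binding one for positivity of the whole vector $\mathscr{E}^*$; this is what the reformulation of $S^*$ via the $(\gamma+\mu)(\Pi - A_7 E^*)$ factor is designed to resolve, and once that identification is made the rest of the argument is a direct application of the quadratic-formula bookkeeping described above.
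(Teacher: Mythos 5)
Your proposal is correct and follows essentially the same route as the paper: the paper states this theorem without a separate proof, relying entirely on the sign analysis given just before it ($\rho_0>0$ from $\theta A_4-(\gamma+\mu)<0$, $\sign(\rho_2)=\sign(\mathscr{R}_0-1)$, and the feasibility condition $\Pi-A_7E^*>0$ for $N^*>0$), and your root-counting via the product and sum of roots of the quadratic is exactly the intended argument. Your additional check that positivity of all remaining components of $\mathscr{E}^*$ reduces to $E^*\in(0,\Pi/A_7)$, via the second expression for $S^*$, fills in a detail the paper leaves implicit but introduces nothing different in substance.
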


%\subsection{Local stability of DFE $\mathscr{E}_s$ when $\mathscr{R}_0=1$}
\subsection{Analysis of the center manifold near DFE $\mathscr{E}_s$ when $\mathscr{R}_0=1$}
In this subsection, we discuss the nature of DFE $\mathscr{E}_s$ when $\mathscr{R}_0=1$. Recall the Jacobian matrix $J_{\mathscr{E}_s}$ from (\ref{Jacobian_DFE}). Notice that, when $\mathscr{R}_0=1$, $J_{\mathscr{E}_s}$ possess a zero eigenvalue. So the equilibrium $\mathscr{E}_s$ becomes non-hyperbolic. Then center manifold theory is the best approach to study the behaviour of this equilibrium. In this regard, we follow Theorem 4.1 described in Castillo-Chavez and Song \cite{castillo2004dynamical}.

To proceed, first we write the mathematical model (\ref{EQ:TheModel}) into the vector form $\od{\mathbf{x}}{t}=f(\mathbf{x})$, with $\mathbf{x}=(x_1,~ x_2,~ x_3,~ x_4,~ x_5,~ x_6,~ x_7)^T=(S,~E,~I_u,~I_n,~I_h,~R,~V)^T$ and $f(\mathbf{x})=(f_1(\mathbf{x}), ~f_2(\mathbf{x}),~ \cdots,~ f_7(\mathbf{x}))^T$. Since $\mathscr{R}_0$ is often inconvenient to use directly as a bifurcation parameter, we introduce $\beta_1$ as a bifurcation parameter. Then $\mathscr{R}_0=1$ gives 
$$\beta_1=\frac{\mu_c (\gamma+\mu)\phi_u \phi_n -\gamma\beta_2(\Pi/\mu)((1-p)\alpha_u\phi_n+p\alpha_n\phi_u)}{\mu_c \gamma((1-p)\phi_n+\nu p\phi_u)}=\beta_1^* ~~(\text{say}).$$ 
Moreover, $\mathscr{R}_0<1$ for $\beta_1<\beta_1^*$ and $\mathscr{R}_0>1$ for $\beta_1>\beta_1^*$. Now, consider the system $\od{\mathbf{x}}{t}=f(\mathbf{x},\beta_1)$ and at $\beta_1=\beta_1^*$ the Jacobian matrix $J_{\mathscr{E}_s}$ have simple zero eigenvalue and all the other eigenvalues have negative real parts. Let $\mathbf{v}=(v_1,~ v_2,~ \cdots~, v_7)$ and $\mathbf{w}=(w_1,~ w_2,~ \cdots~, w_7)^T$ are the respective left and right eigenvectors corresponding to zero eigenvalue. Simple algebraic calculations lead us to the following expressions of $v_i$ and $w_i$, $i=1,2,\cdots, 7$:
\begin{align*}
	v_1&=0=v_5=v_6, \quad v_2=v_2>0,\quad v_3=\frac{1}{\phi_u}\left(\beta_1^*+\frac{\beta_2\alpha_u\Pi}{\mu \mu_c}\right)v_2,\\
	v_4&=\frac{1}{\phi_n}\left(\beta_1^*\nu+\frac{\beta_2\alpha_n\Pi}{\mu \mu_c}\right)v_2,\quad
	v_7=\frac{\beta_2\Pi}{\mu \mu_c}v_2,
\end{align*}
and
\begin{align*}
	w_1&=\frac{1}{\mu}\left[\frac{\theta}{\theta+\mu}\left\{\frac{(1-p)\gamma}{\phi_u}\left(\sigma_u+\frac{\sigma_h\eta_u}{\phi_h}\right)+\frac{p\gamma}{\phi_n}\left(\sigma_n+\frac{\sigma_h\eta_n}{\phi_h}\right)\right\}-(\gamma+\mu)\right]w_2,\\
	w_2&=w_2>0,\quad w_3=\frac{(1-p)\gamma}{\phi_u}w_2,\quad w_4=\frac{p\gamma}{\phi_n}w_2,\quad w_5=\frac{1}{\phi_h}\left(\frac{(1-p)\gamma\eta_u}{\phi_u}+\frac{p\gamma\eta_n}{\phi_n}\right)w_2,\\
	w_6&=\frac{1}{\theta+\mu}\left[\frac{(1-p)\gamma}{\phi_u}\left(\sigma_u+\frac{\sigma_h\eta_u}{\phi_h}\right)+\frac{p\gamma}{\phi_n}\left(\sigma_n+\frac{\sigma_h\eta_n}{\phi_h}\right)\right]w_2,\\
	w_7&=\frac{1}{\mu_c}\left(\frac{(1-p)\gamma\alpha_u}{\phi_u}+\frac{p\gamma\alpha_n}{\phi_n}\right)w_2.
\end{align*}
Clearly, all the $v_i$ and $w_i$ ($i=1,2,\cdots,7$) are positive except $w_1$. From the expression of $w_1$, note that $w_1=\frac{1}{\mu}(\theta A_4-(\gamma+\mu))w_2$. Also, since  $\theta A_4-(\gamma+\mu)<0$, so $w_1<0$.
Now, according to the Theorem 4.1 \cite{castillo2004dynamical}, we calculate the quantities $a$ and $b$, which are defined by
\begin{align}
	a&=\sum_{k,i,j=1}^{7}v_kw_iw_j\frac{\partial^2 f_k}{\partial x_i \partial x_j}\label{center_manifold_a},\\
	b&=\sum_{k,i=1}^{7}v_kw_i\frac{\partial^2 f_k}{\partial x_i \partial \beta_1}\label{center_manifold_b},
\end{align}
where all the partial derivatives are evaluated at the DFE $\mathscr{E}_s$ and $\beta_1=\beta_1^*$. Since $v_1$ is zero and $f_k$, $k=3,4,\cdots 7$ are linear functions of the state variables, so we need to calculate only $\frac{\partial^2 f_2}{\partial x_i \partial x_j}$. Then we have
\begin{align*}
	\frac{\partial^2 f_2}{\partial x_1 \partial x_7}&=\beta_2, \quad \frac{\partial^2 f_2}{\partial x_2 \partial x_3}=-\frac{\beta_1^*\mu}{\Pi}=\frac{\partial^2 f_2}{\partial x_3 \partial x_5}=\frac{\partial^2 f_2}{\partial x_3 \partial x_6},\quad \frac{\partial^2 f_2}{\partial x_3 \partial x_4}=-\frac{\beta_1^*\mu(1+\nu)}{\Pi},\\
	\frac{\partial^2 f_2}{\partial x_3^2}&=-\frac{2\beta_1^*\mu}{\Pi},\quad \frac{\partial^2 f_2}{\partial x_2 \partial x_4}=-\frac{\beta_1^*\mu\nu}{\Pi}=\frac{\partial^2 f_2}{\partial x_4 \partial x_5}=\frac{\partial^2 f_2}{\partial x_4 \partial x_6}, \quad \frac{\partial^2 f_2}{\partial x_4^2}=-\frac{2\beta_1^*\mu\nu}{\Pi},
\end{align*}
and the rest of the derivatives are zero. Thus from (\ref{center_manifold_a}), we have 
\begin{equation*}
	a=v_2\left[\beta_2w_1w_7-\frac{\beta_1^*\mu}{\Pi}\left\{w_3(w_2+2w_3+w_4+w_5+w_6)+\nu w_4(w_2+w_3+2w_4+w_5+w_6)\right\}\right].
\end{equation*}
Again, since $v_1$ is zero and  $\frac{\partial^2 f_k}{\partial x_i \partial \beta_1}=0$ for $k=3,4,\cdots,7$, then from (\ref{center_manifold_b}) we have 
\begin{equation*}
	b=v_2\left[w_3\frac{\partial^2 f_2}{\partial x_3 \partial \beta_1}+w_4\frac{\partial^2 f_2}{\partial x_4 \partial \beta_1}\right]=v_2(w_3+\nu w_4)>0.
\end{equation*}
Since $w_1<0$, so $a<0$.  Hence, for the system (\ref{EQ:TheModel}), $a<0$ and $b>0$. Thus the system exhibit forward bifurcation at $\beta_1=\beta_1^*$, and an endemic equilibrium appears which is locally asymptotically stable.

We numerically verify the existence of forward bifurcation with respect to the basic reproduction number $\mathscr{R}_0$. Since $\mathscr{R}_0$ is not a model parameter and it depends on $\beta_1$, so we vary $\beta_1$. When $\mathscr{R}_0$ crosses the threshold value $1$, the system transits from the stable DFE to the stable endemic equilibrium and the DFE becomes unstable (Fig. \ref{forward_bif}). In the figure, blue-coloured line depicts the stable DFE and red curve depicts the stable endemic equilibrium.

\begin{figure}[H]
	\centering
	\begin{tabular}{cccc}
		\includegraphics[width = 0.7\textwidth]{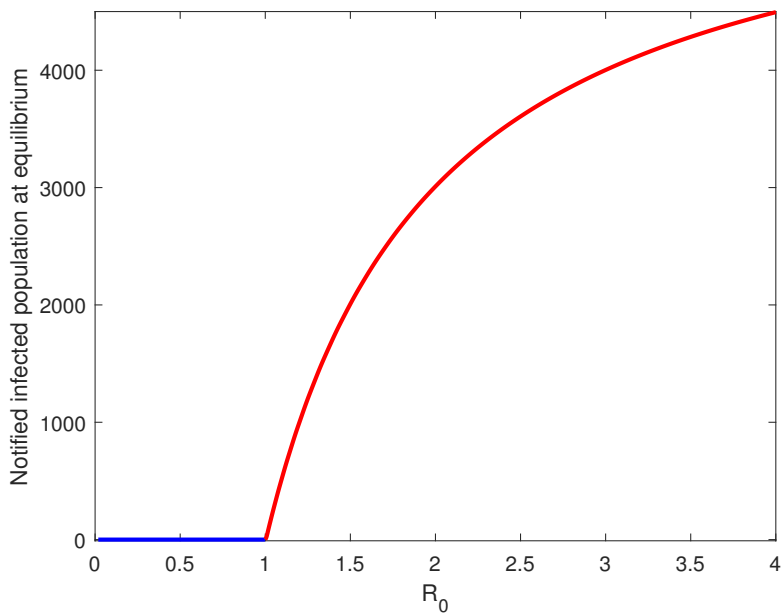}
	\end{tabular}
	\caption{Forward bifurcation of the system (\ref{EQ:TheModel}) with respect to the basic reproduction number $\mathscr{R}_0$. The parameter values are taken as $\Pi=496.68615$, $\beta_2=6.6248\times 10^{-8}$, $\eta_u=0.0223$, $\eta_n = 0.7846$, $\sigma_h= 0.9181$, $\delta_h = 0.0046$, $\alpha_u = 1.3232\times 10^{-6}$, $\alpha_n = 0.0741$ and the other parameter values are same as Table \ref{tab:mod1}.}
	\label{forward_bif}
\end{figure}

We also draw the time series plot by choosing two different values of $\beta_1$. For $\beta_1=0.15$, $\mathscr{R}_0<1$ and all the infected compartments (un-notified, notified and hospitalized) becomes zero and for $\beta_1=0.3$, $\mathscr{R}_0>1$ and infected compartments persists in a stable manner as time evolves (Fig. \ref{time_series}).

\begin{figure}[h]
	%	\centering
	\hspace*{-2cm}
	\begin{tabular}{llll}
		\includegraphics[height = 10cm, width = 20cm]{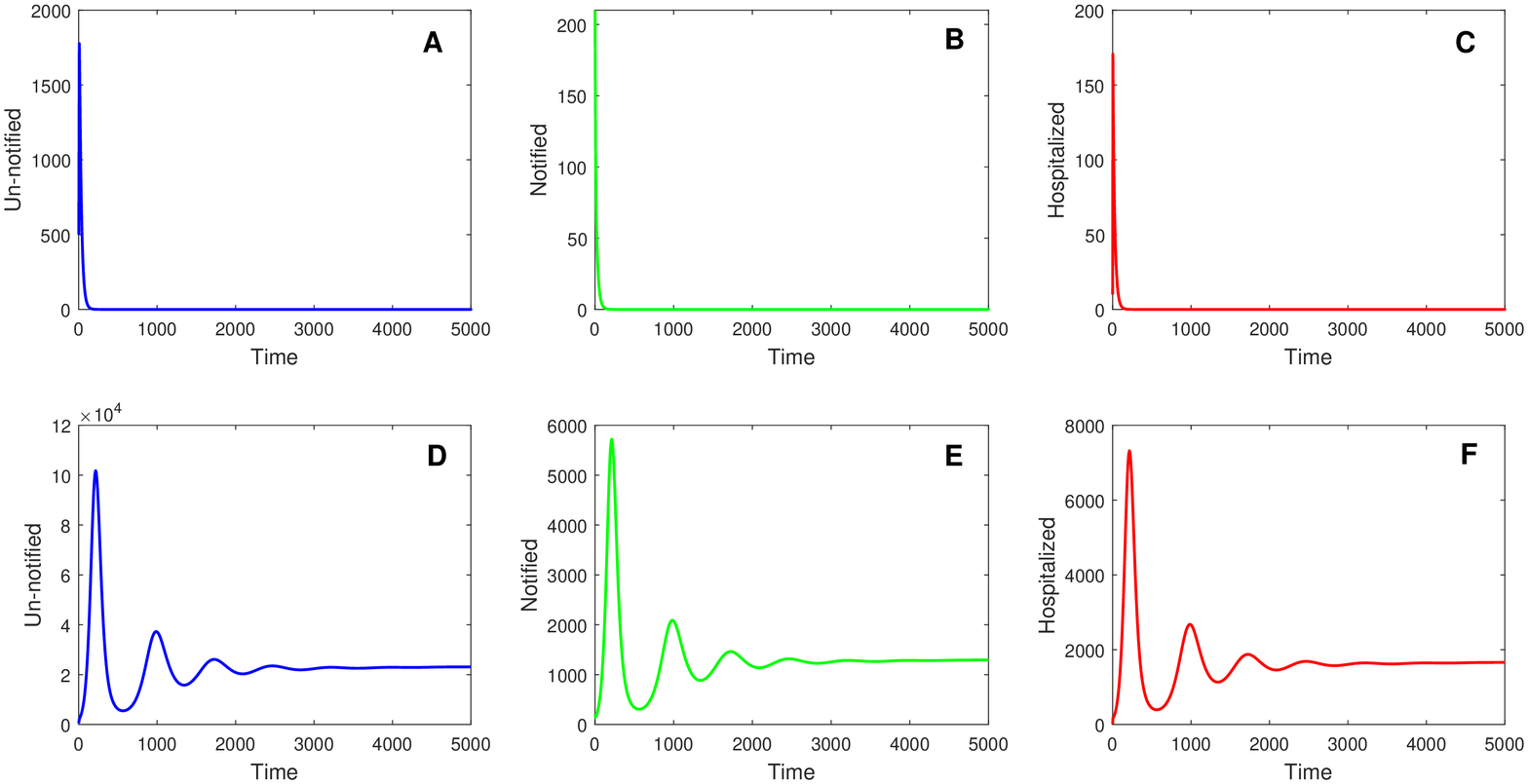}
	\end{tabular}
	\caption{Time series evaluation of infected compartments for the system (\ref{EQ:TheModel}). In Fig. (A-C), $\beta_1=0.15$ where $\mathscr{R}_0<1$, and for Fig. (D-F), $\beta_1=0.3$ where $\mathscr{R}_0>1$. The other parameter values are same as Fig. \ref{forward_bif}.}
	\label{time_series}
\end{figure}
%%%%%%%%%%%%%%%%%%%%%%%%%%%%%%%%%%%%%%
%\subsection{Global stability oef endemic equilibrium}
%The study of the endemic global stability is not only important from mathematical point of view, but also essential in predicting the evolution of the disease in the long run so that prevention and intervention strategies can be effectively designed, and public health administrative efforts can be properly scaled.

%%%%%%%%%%%%%%%%%%%%%%%%%%%%%%%%%%%%%%%%%%%%%%%%%%
\section{Model calibration}\label{Sec:calibration}
Daily new deaths between the period 1 March 2021 to 31 May 2021 for the districts Bangalore Urban and Chennai are used to calibrate the model. The rationale behind using mortality data is that this data is more reliable than the incidence data due to limited testing capacities \cite{ngonghala2020could,lau2021evaluating}. We fit the model output ($\int_{t=1}^{T} [\delta_n I_n + \delta_h I_h] dt$, where T=92) to daily new deaths and cumulative deaths due to COVID-19 for both districts. Fixed parameters of the model \eqref{EQ:TheModel} are given in Table \ref{tab:mod1}. Eight unknown model parameters are estimated, namely $\beta_1$, $\beta_2$, $\nu_u$, $\nu_h$, $\sigma_h$, $\delta_h$, $\alpha_u$ and $\alpha_n$. In addition, initial number of exposed people $E(0)$ is also estimated from the mortality data. During the specified time period, nonlinear least square solver $lsqnonlin$ (in MATLAB) is used to fit simulated daily mortality data to the reported COVID-19 deaths in Bangalore urban and Chennai districts. MATLAB codes for generating the fitting figures for Chennai has been uploaded to the GitHub repository \url{https://github.com/indrajitg-r/ODE_model_fitting}. The fitting of the daily new deaths and cumulative deaths due to COVID-19 are displayed in Fig. \ref{Fig:model_fitting_blore} and Fig. \ref{Fig:model_fitting_chennai} for Bangalore urban and Chennai, respectively. 

\begin{table}[h]
	\setlength{\tabcolsep}{12pt}
	\begin{center}		\caption{Initial conditions used to simulate the model (\ref{EQ:TheModel}) in Bangalore urban and Chennai.}
		\label{tab:ICs_blore_chennai}
		\begin{tabular}{cp{4cm}p{2cm}p{2cm}p{3cm}} \hline
			\textbf{IC's} & \textbf{Description} & \textbf{Values for Bangalore Urban} & \textbf{Values for Chennai} & \textbf{Reference} \\ \hline
			$N_{init}$ & Total population & 12765000 & 11235000 & \cite{Bangalore2021pop,Chennai2021pop}\\
			$S(0)$ & Initial number of susceptible  & $0.7 \times N_{init}$ & $0.7 \times N_{init}$ & -- \\ 
			$E(0)$ & Initial number of exposed people & (1-15000) & (1-15000) & To be estimated \\
			$I_u(0)$ & Initial number of un-notified patients & 500 & 500 & -- \\
			$I_n(0)$ & Initial number of notified patients & 210 & 189 & \cite{covid19india} \\ 
			$I_h(0)$ & Initial number of hospitalized patients & 10 & 10 & -- \\ 
			$R(0)$ & Initial number of recovered patients & 1000 & 1000 & -- \\
			$V(0)$ & Initial concentration of virus & $10^{-7}$ & $10^{-7}$ & --\\ 
			\hline
\end{tabular}\end{center}\end{table}

\begin{figure}[H]
	\centering
	\includegraphics[width=0.49\textwidth]{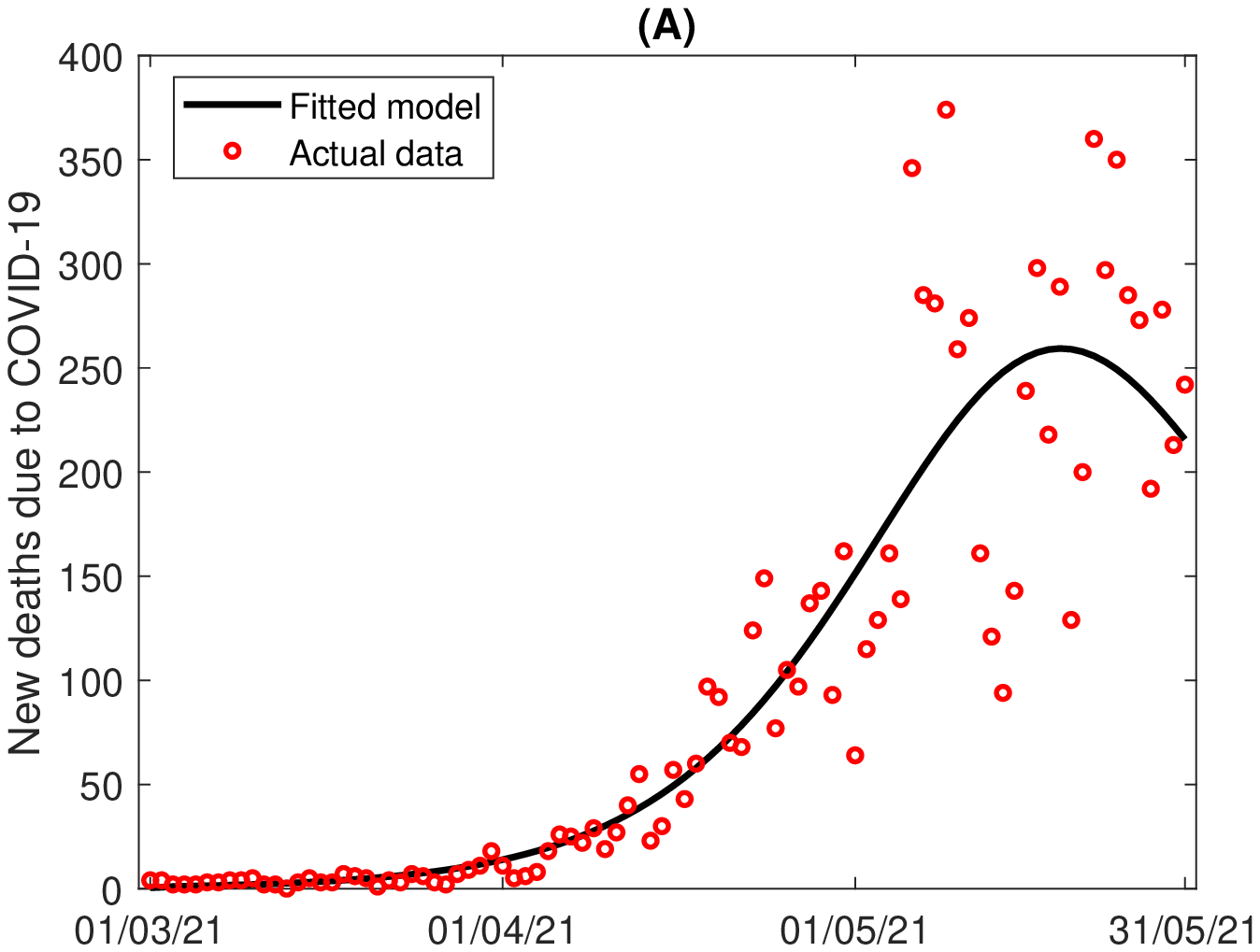}
	\includegraphics[width=0.49\textwidth]{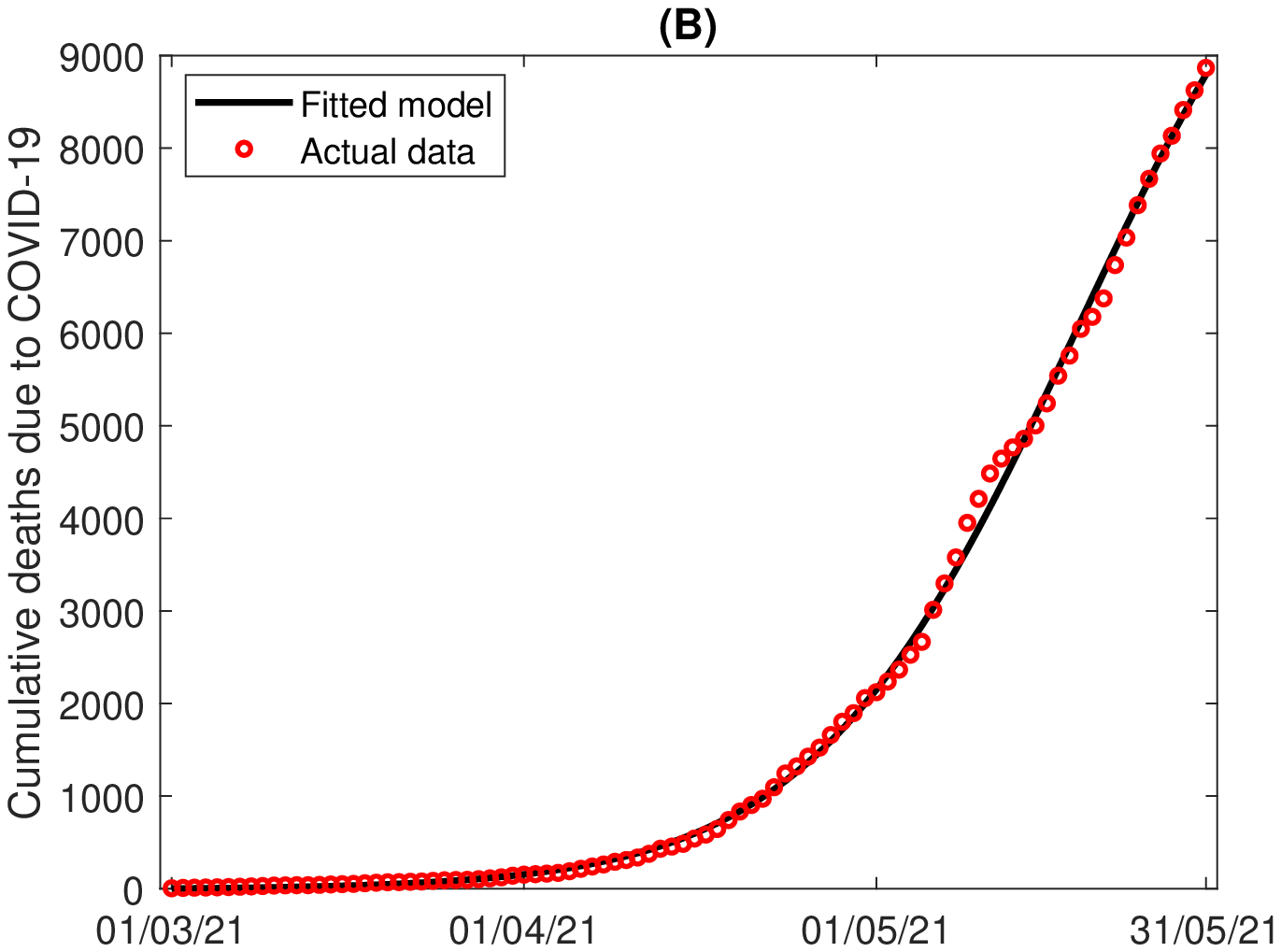}
	\caption{Fitting model solution to (A) new deaths and (B) cumulative death data due to COVID-19 in Bangalore Urban district.}
	\label{Fig:model_fitting_blore}
\end{figure}

From Fig. \ref{Fig:model_fitting_blore}, it can be argued that the model output fits the mortality data well. Estimated parameter values for Bangalore urban are as follows: $\beta_1=0.2083$, $\beta_2=6.6248 \times 10^{-6}$, $\eta_u =0.0223$, $\eta_n =0.7846$, $\sigma_h =0.9181$, $\delta_h =0.0046$, $\alpha_u =1.3232 \times 10^{-6}$, $\alpha_n = 0.0741$ and $E(0)= 4497$. The data trend is well captured by the model output as seen in Fig. \ref{Fig:model_fitting_chennai}. Estimated parameter values for Chennai district data are as follows: $\beta_1=0.2155$, $\beta_2=5.361 \times 10^{-6}$, $\eta_u =0.0012$, $\eta_n =0.5470$, $\sigma_h =0.8419$, $\delta_h =3.204 \times 10^{-4}$, $\alpha_u =6.278 \times 10^{-4}$, $\alpha_n = 0.0711$ and $E(0)= 1919$. Using these estimated parameters for both the districts, we investigate different control strategies.

%\begin{table}[h]
%	\begin{center}
%		\caption{Initial conditions used to simulate the model (\ref{EQ:TheModel}) in Chennai.}
%		\label{tab:ICs_chennai}
%		\begin{tabular}{cp{5cm}cc} \hline
%			\textbf{IC's} & \textbf{Description} & \textbf{Values} & \textbf{Reference} \\ \hline
%			$N_{init}$ & Total population of Bangalore urban & 11235000 & \citet{Chennai2021pop}\\
%			$S(0)$ & Initial number of susceptible  & $0.7 \times N_{init}$ & -- \\ 
%			$E(0)$ & Initial number of exposed people & (1-15000) & To be estimated \\
%			$I_u(0)$ & Initial number of un-notified patients & 500 & -- \\
%			$I_n(0)$ & Initial number of notified patients & 189 & \cite{covid19india} \\ 
%			$I_h(0)$ & Initial number of hospitalized patients & 10 & -- \\ 
%			$R(0)$ & Initial number of recovered patients & 1000 & -- \\
%			$V(0)$ & Initial concentration of virus & $10^{-7}$ & --\\ 
%			\hline
%\end{tabular}\end{center}\end{table}
\begin{figure}[H]
	\centering
	\includegraphics[width=0.49\textwidth]{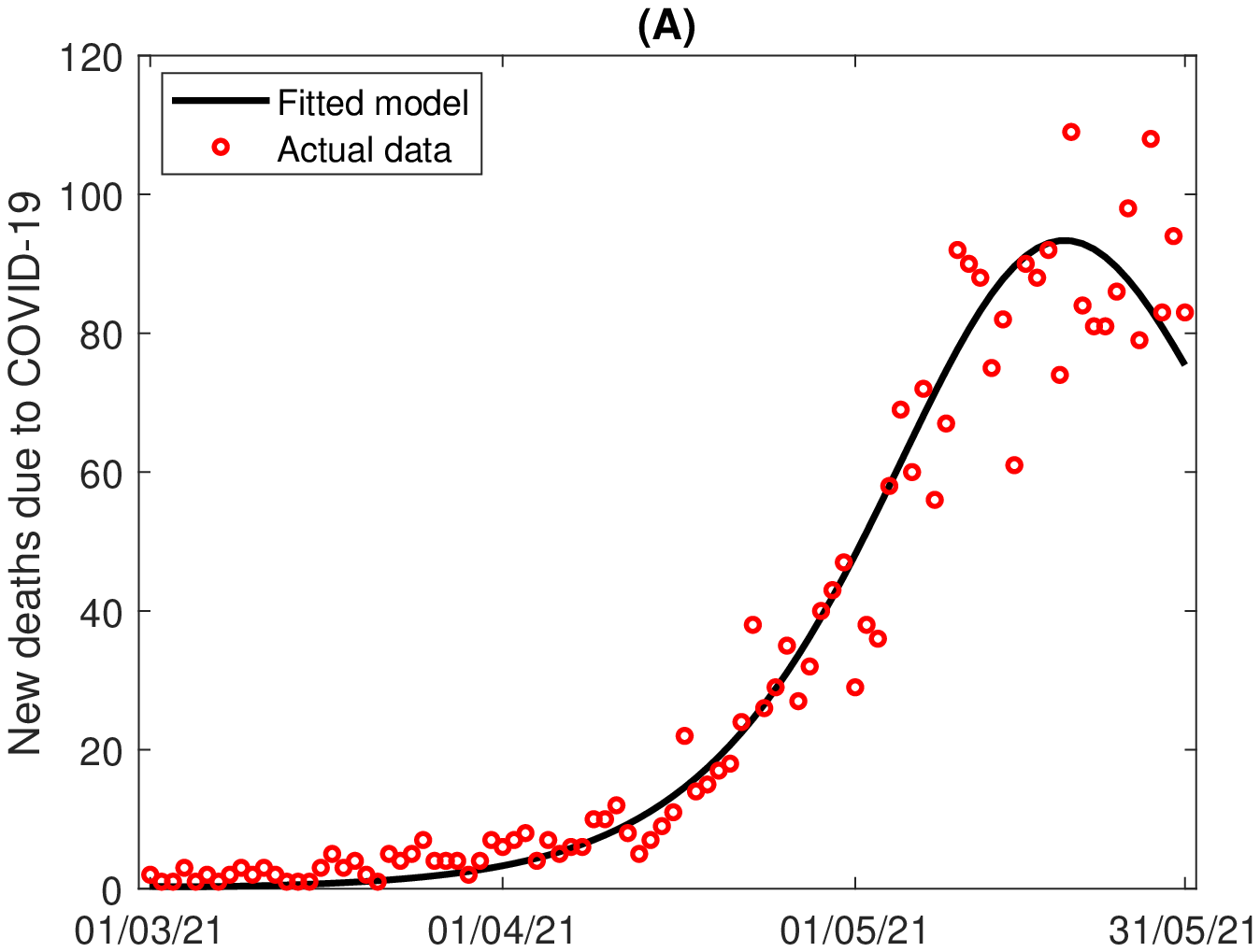}
	\includegraphics[width=0.49\textwidth]{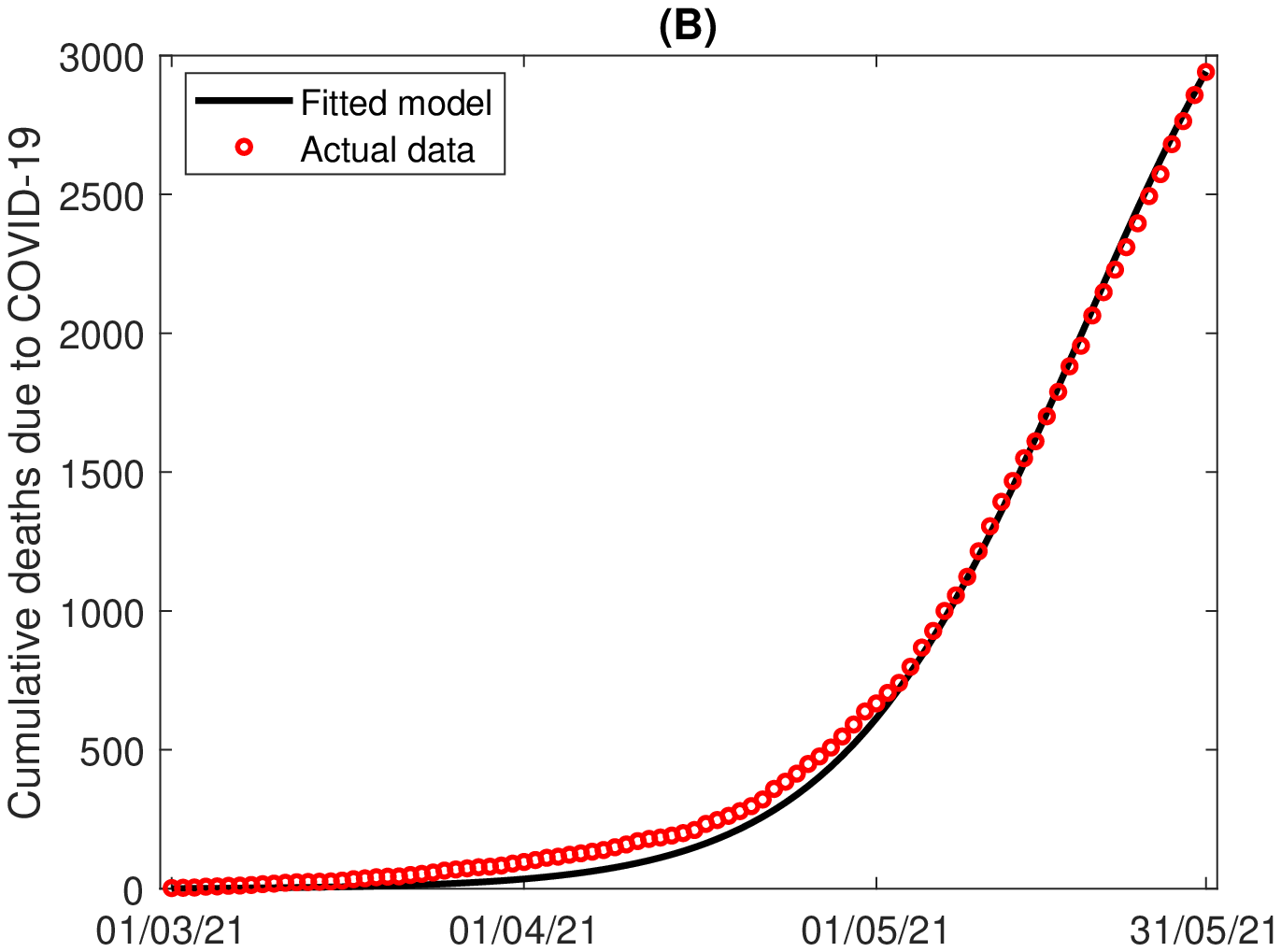}
	\caption{Fitting model solution to (A) new deaths and (B) cumulative deaths data due to COVID-19 in Chennai district.}
	\label{Fig:model_fitting_chennai}
\end{figure}

\section{Control interventions and immigration of infectives}\label{Sec:controls}
In this section, we investigate different control mechanisms and immigration of infectives through numerical simulation. We examine the effects of vaccination, treatment by drugs and use of face masks with different degrees of efficacy. 

\subsection{Use of face mask}
Although several vaccines are discovered and people are vaccinated in a rapid process, several drugs are in trial stage and few of them are implemented for hospitalized patients, nevertheless, use of face mask still could offer as a non-pharmaceutical intervention, to avoid transmission of direct contact and airborne transmission of free virus. In a densely populated country like India, it is almost impossible to determine how many people have come in contact with an infected person. Thus wearing face mask properly is an important control strategy for current epidemic outbreak. The idea  of using face mask to combat respiratory infections in community was not new \cite{van2008professional}. Face masks reduce the amount of droplet inoculum discharging from infectious individuals by capturing a proportion of droplets within the mask \cite{van2008professional, stutt2020modelling}. Face masks also reduce the amount of droplet inoculum inhaled by susceptible individuals by capturing a proportion of droplets in inhaled air and hence reducing the airborne transmission rate. Let $0\leq c_m\leq1$ is the community-wide compliance in the face mask usage and $0<\epsilon_m\leq 1$ is the face mask efficiency in preventing of the disease. Then the term $1- c_m \epsilon_m$ describe a measure of reduction in community contacts and free virus contacts due to use of face mask \cite{stutt2020modelling, ngonghala2020could}. Note that, $0\leq1- c_m \epsilon_m\leq 1$. We consider three different efficiency levels, depending on the material and layering of the face masks viz. N95 masks are 95\% effective; Surgical masks are 55\% effective and Multi-layered cloth masks are 38\% effective \cite{sharma2020efficacy}. Community-wide compliance in the face mask usage is chosen at three different levels namely, High 75\%, Medium 50\% and low 25\%. The baseline values for $c_m$ and $\epsilon_m$ are taken to be 10\% and 0.1, respectively. 

\subsection{Treatment by drugs: faster recovery}
Recently, several drugs are developed for moderate to severe COVID-19 patients and some of them are in different phase of trial period. The 3rd phase trial of Soin et al. \cite{soin2021tocilizumab} in India predict that, tocilizumab plus standard care in patients admitted to hospital with moderate to severe COVID-19 have faster recovery and reduce the burden of intensive care. Recently, Institute of Nuclear Medicine and Allied Sciences (DRDO-
INMAS) and Dr Reddy’s laboratories, Hyderabad jointly developed a drug, 2-deoxy-D-glucose (2-DG) for emergency use in symptomatic COVID-19 patients \cite{bere20212dg,samal2021anti}. On 8$^{th}$ May, 2021, the Drugs Controller General of India (DCGI) approved this drug \cite{dcgi2021pop}. As per the government release, clinical trial data shows that the drug helps in recovery timelines of hospitalized patients and reduces supplemental oxygen dependence. Consider $\epsilon_t\geq1$ be the factor describing the faster recovery of hospitalized patients. Taking the base value of the $\epsilon_t$ as 1, we vary this parameter upto 1.3 while evaluating control strategies.

\subsection{Effect of vaccination}
To control the ongoing global pandemic, vaccine is a critical tool as a pharmaceutical intervention. At present, Covaxin and Covishield vaccines are being used for the vaccination drive against COVID-19 in India  and also, several vaccine are in trial process \cite{sharun2021india}. To extend our mathematical model under the effect of vaccination, we introduced a new compartment of population, namely protected ($P(t)$), those who are vaccinated. Now, the Susceptible and recovered population both can progress to the protected population at rates $\xi_s$ and $\xi_r$, respectively, through vaccination. For simplicity, we consider that protected population completed the required doses of vaccine (and also completed 14 days after last dose) and thus does not consider any intermediate stage (i.e., the time gap between doses or before 14 days of last dose). The base values for vaccination rates are chosen to be 0.00085 for both susceptibles and recovered people. This value is calculated from the vaccination data of India \cite{covid19india}. Simulations are performed by increasing the vaccination rates to 0.05 for both the sub-populations.

\subsection{Immigration of infectives}
Communicable disease like the current COVID-19 epidemic may be introduced into a community by the arrival of infectives outside the community. Indeed, in India, the first COVID-19 case was reported in Trissur, Kerala, on January 30, 2020, who was returned from Wuhan, China \cite{andrews2020first}. Thus, immigration of infectives in a disease-free community played a crucial role for the spread of the disease. We extend the mathematical model (\ref{EQ:TheModel}) by assuming a constant flow of new members of un-notified person. Recruitment of notified person was not considered as they are restricted for travelling. On the other hand, immigration is necessary for many people to maintain a livelihood. Thus, individuals from neighboring locations must be allowed to immigrate for the well being of the overall community. Let $\Lambda$ be the recruitment rate of un-notified infected individuals through immigration. We consider three levels of immigration as low, medium and high with $\Lambda$ = 100, 1000 and 5000 respectively. 

By considering above control interventions and immigration of infectives, the model (\ref{EQ:TheModel}) can be extended to the following model:
\begin{eqnarray}
	\begin{cases}
		\begin{array}{lll}
			\displaystyle{\od{S}{t}} &=& \Pi-\frac{\beta_1 (1- c_m \epsilon_m) S( I_u+ \nu I_n)}{N} - \beta_2 (1- c_m \epsilon_m) SV -(\mu + \xi_s) S + \theta R, \\
			\displaystyle{\od{E}{t}} &=& \frac{\beta_1 (1- c_m \epsilon_m) S( I_u+ \nu I_n)}{N} + \beta_2 (1- c_m \epsilon_m) SV - (\gamma + \mu) E, \\
			\displaystyle{\od{I_u}{t}} &=& \Lambda + (1-p) \gamma E - (\eta_u + \sigma_u +\mu )I_u, \\
			\displaystyle{\od{I_n}{t}} &=&  p \gamma E-(\eta_n + \sigma_n + \mu + \delta_n) I_n, \\
			\displaystyle{\od{I_h}{t}} &=& \eta_u I_u+ \eta_n I_n - ( \epsilon_t \sigma_h + \mu + \delta_h)I_h,  \\
			\displaystyle{\od{R}{t}} &=& \sigma_u I_u+\sigma_n I_n + \epsilon_t \sigma_h I_h - (\theta + \mu +\xi_r)R, \\
			\displaystyle{\od{P}{t}} &=& \xi_s S + \xi_r R - \mu P, \\
			\displaystyle{\od{V}{t}} &=& \alpha_u I_u+ \alpha_n I_n - \mu_c V.
		\end{array}
		\label{EQ:TheModel_with_control}
	\end{cases}
\end{eqnarray}

Intensive numerical simulations are performed to quantify the effects of various strategies in Bangalore urban and Chennai district. The fixed parameters and initial conditions are taken from Table \ref{tab:mod1} and \ref{tab:ICs_blore_chennai}. The initial number of protected/vaccinated people is estimated using vaccination coverage data from both the districts \cite{covid19india}. Estimated parameters for both districts are mentioned in the model calibration section. Using these parameters and control parameters at various levels we simulate the model (\ref{EQ:TheModel_with_control}) with three months ahead projections. 

For both the districts, the projection period is 1 June 2021 to 31 August 2021. Initially, we simulate the effects of single control parameters while other parameters are fixed at the base level. Change in the hospitalized populations with different control levels are depicted for Bangalore urban and Chennai in Fig. \ref{Fig:single_control_Bangalore} and Fig. \ref{Fig:single_control_chennai}, respectively. For both the districts it can be seen that $\epsilon_t$ and $\xi_s$ have significant impact on reduction of hospitalized persons. Face mask related parameters ($\epsilon_m$ and $c_m$) and vaccination rate of recovered persons ($\xi_r$) show minimal effect on reduction of hospitalized populations. On the other hand, immigration rate of un-notified infected people has significant on the hospitalized populations. In particular, high immigration rates may drive the hospitalized cases to high values.

Further, to quantify the effects of single control strategies more precisely, we calculate the percentage reduction of notified and hospitalized persons in the three month projection period. We use the following basic formula
\begin{eqnarray}\label{percent_reduction}
	\textmd{Percentage reduction}=\frac{\textmd{Baseline cases} -
		\textmd{Cases with control}}{\textmd{Baseline cases}}\times 100.
\end{eqnarray}

The percentage reduction in notified and hospitalized cases are reported in Table \ref{percent_reduction_single_control} for Bangalore urban and Chennai districts. The fixed parameters and initial conditions are taken from Tables \ref{tab:mod1} and \ref{tab:ICs_blore_chennai}.

\begin{figure}[H]
	\centering
	\includegraphics[width=0.49\textwidth]{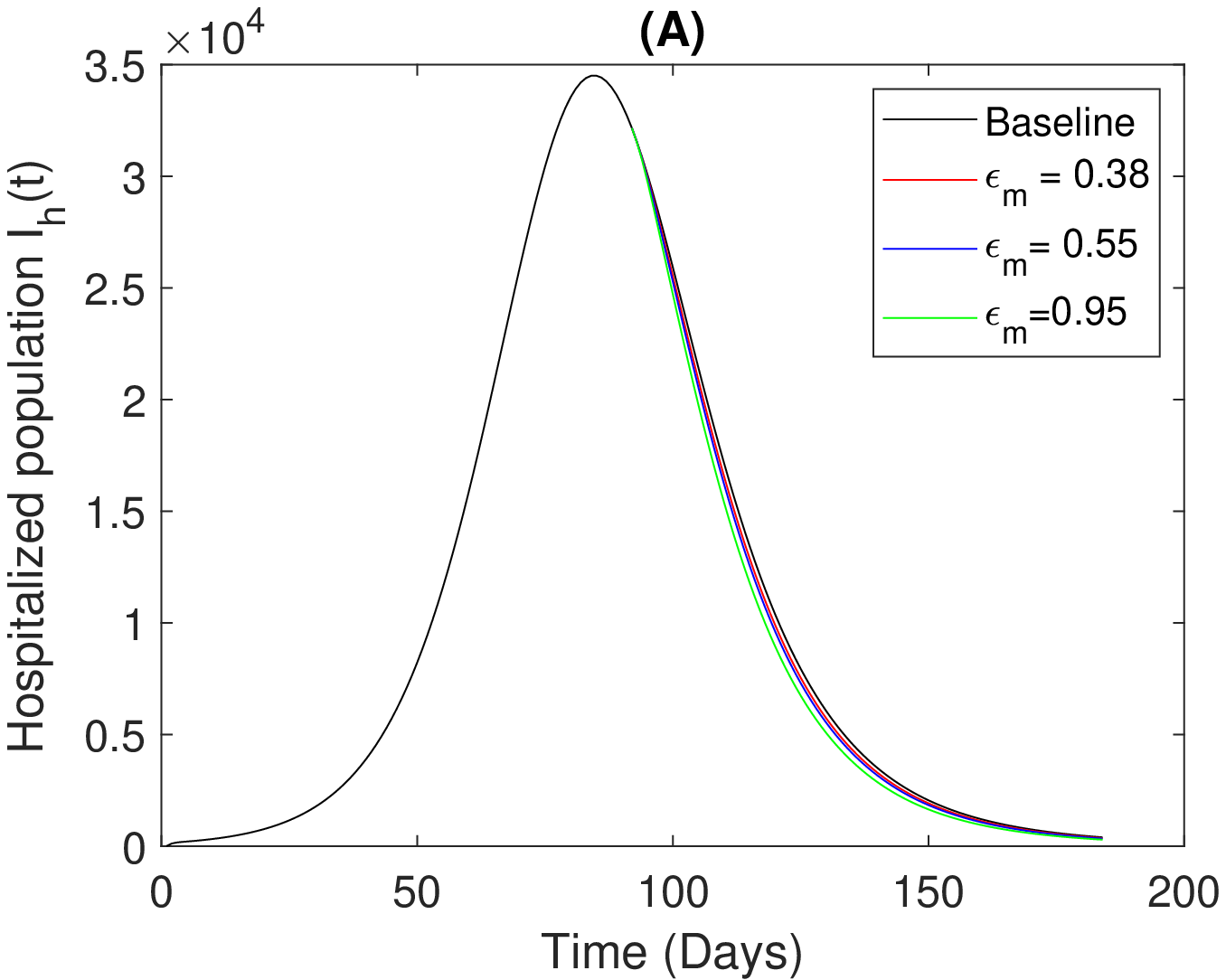}
	\includegraphics[width=0.49\textwidth]{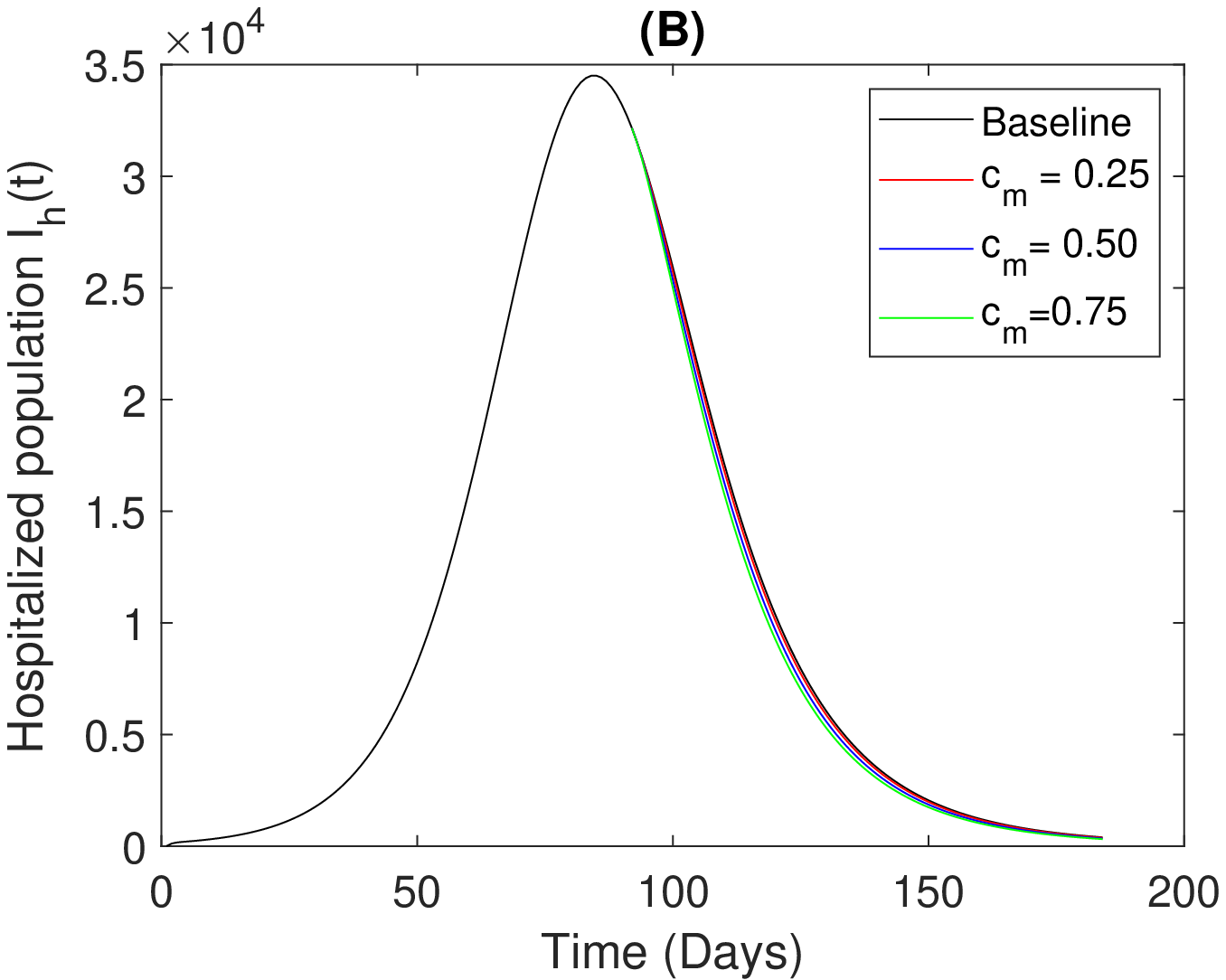}\\
	\includegraphics[width=0.49\textwidth]{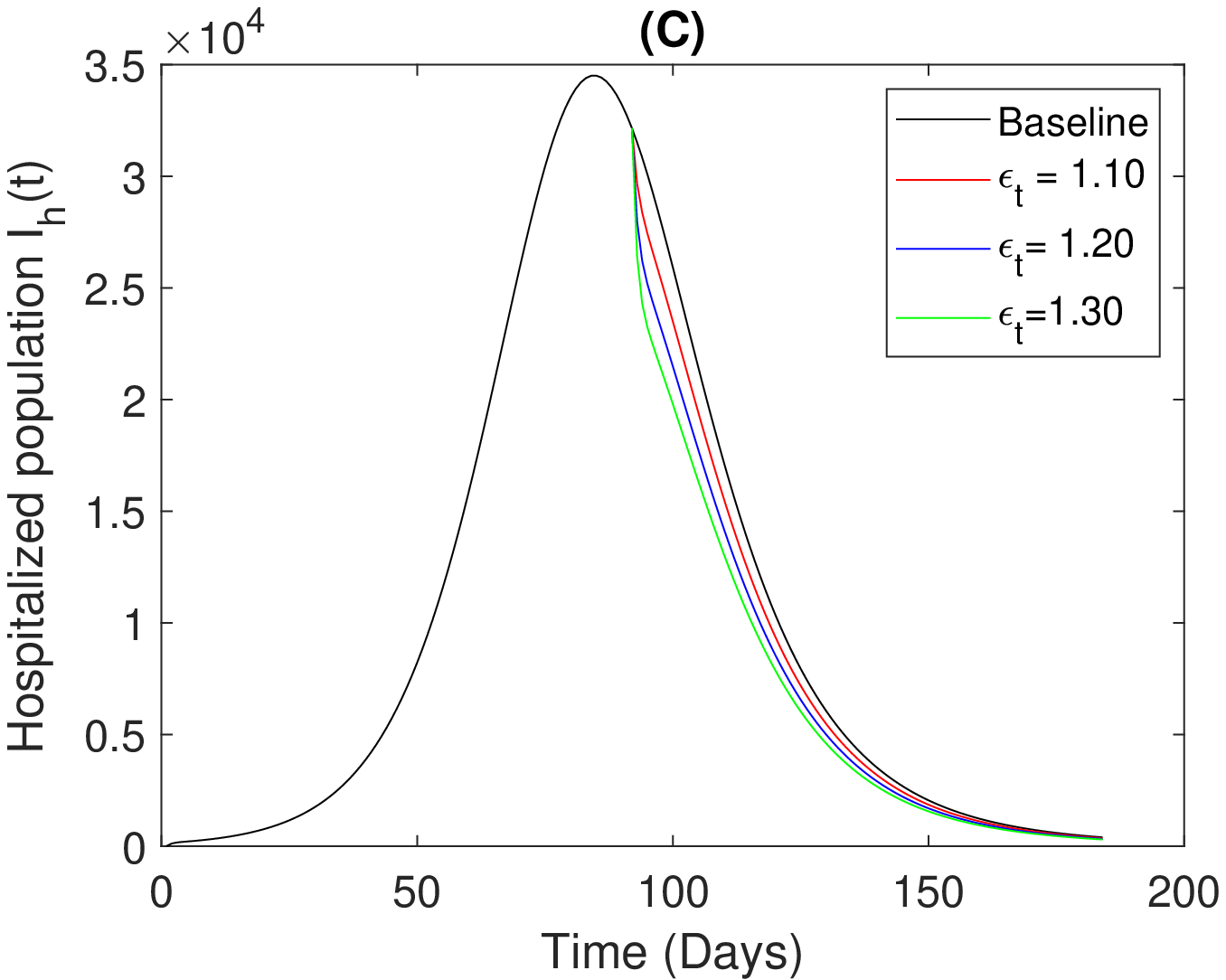}
	\includegraphics[width=0.49\textwidth]{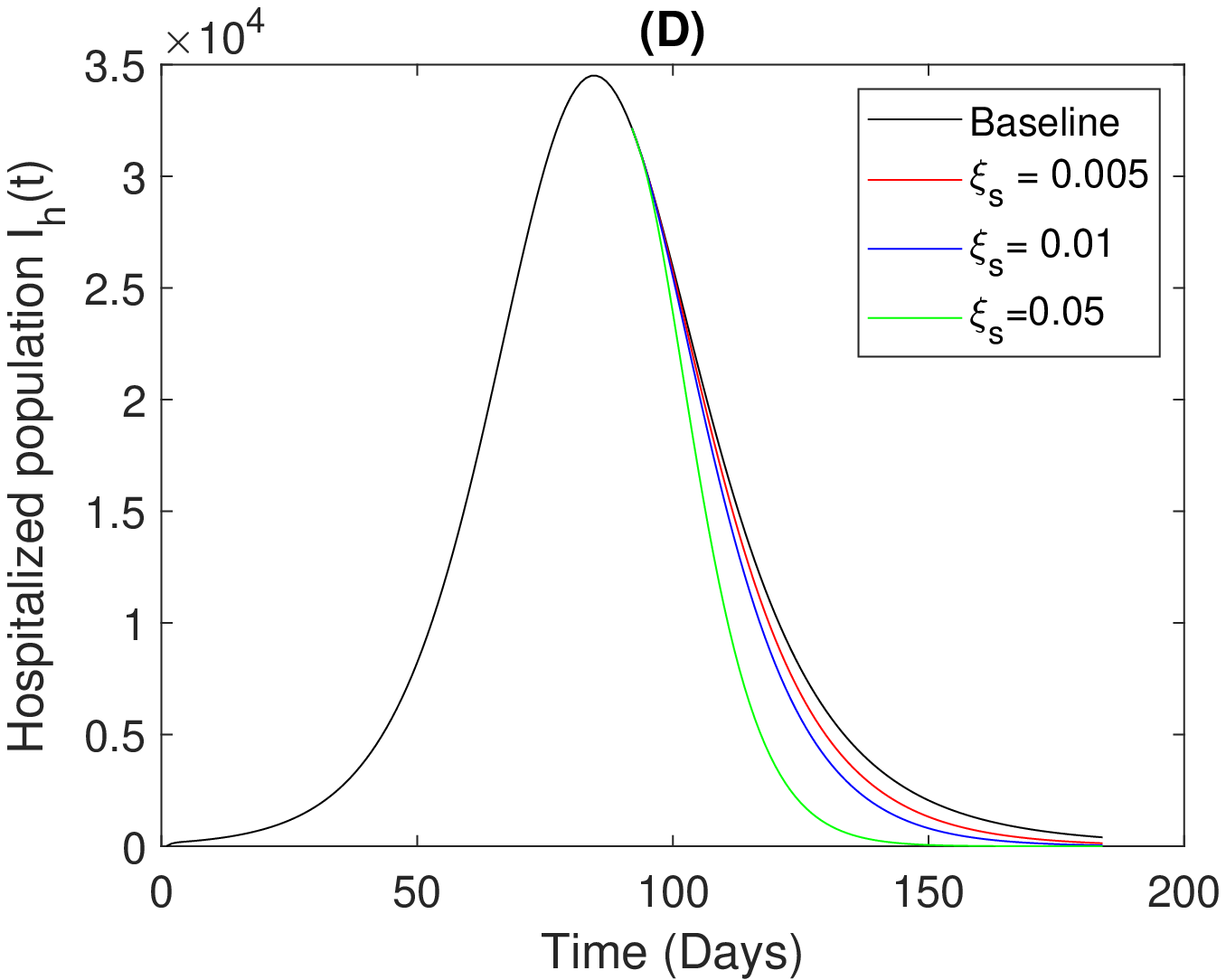}\\
	\includegraphics[width=0.49\textwidth]{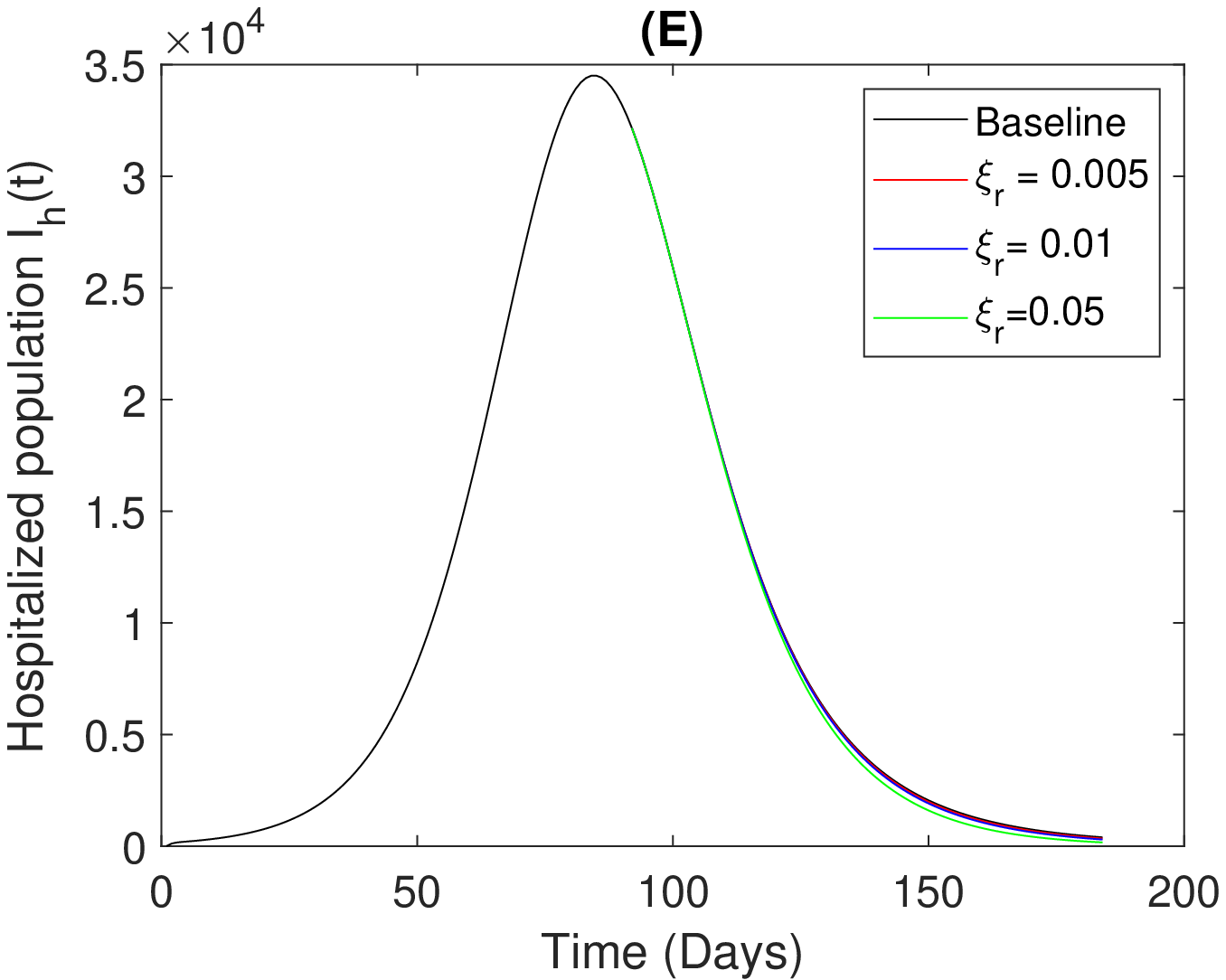}
	\includegraphics[width=0.49\textwidth]{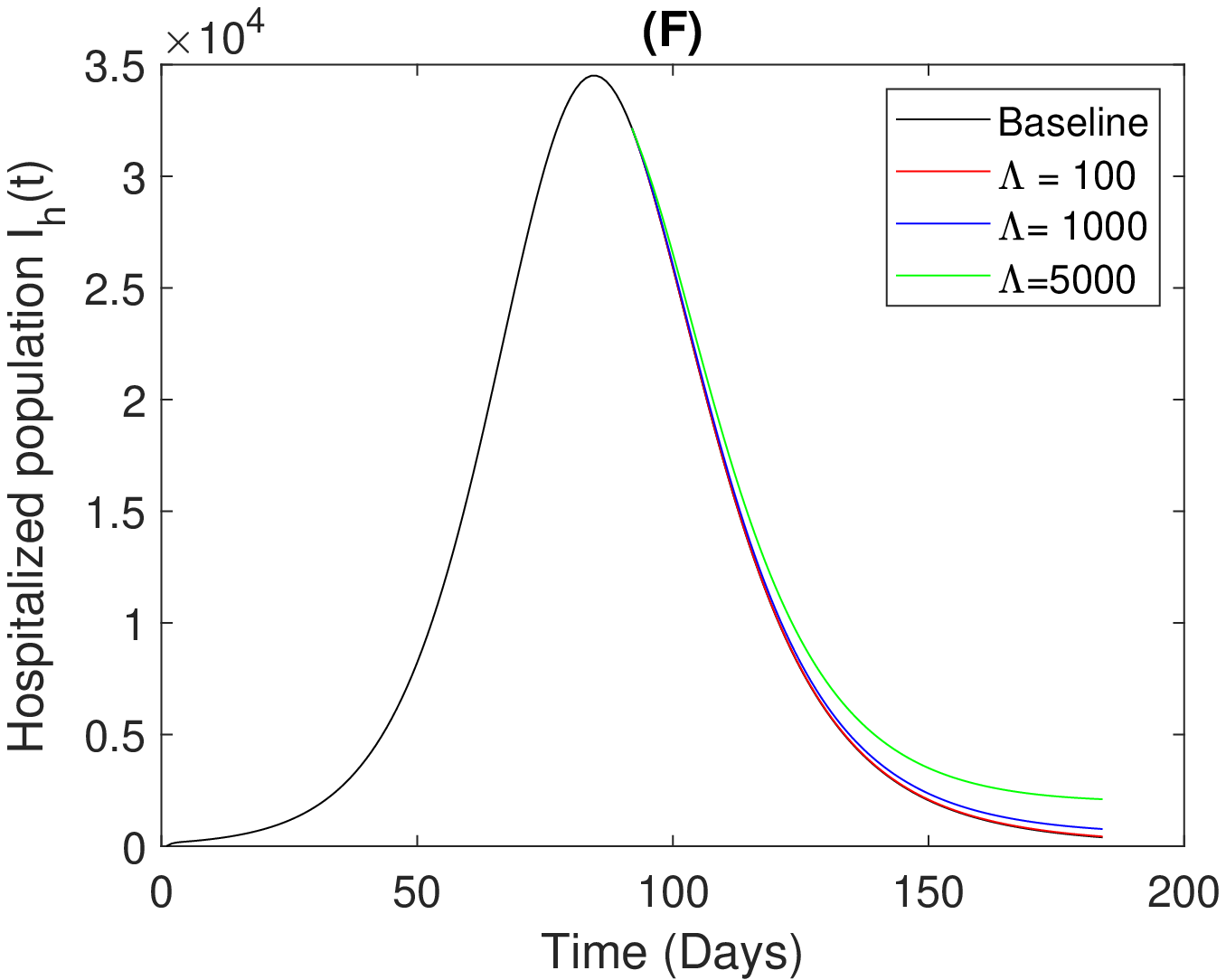}
	\caption{Effects of single control strategies in Bangalore urban namely (A) efficacy of face mask usage, (B) community-wide compliance in face mask usage, (C) increase in recovery rate of hospitalized patients, (D) vaccination rate of susceptible individuals, (E) vaccination rate recovered people and (F) immigration of infectives.}
	\label{Fig:single_control_Bangalore}
\end{figure}

\begin{figure}[H]
	\centering
	\includegraphics[width=0.49\textwidth]{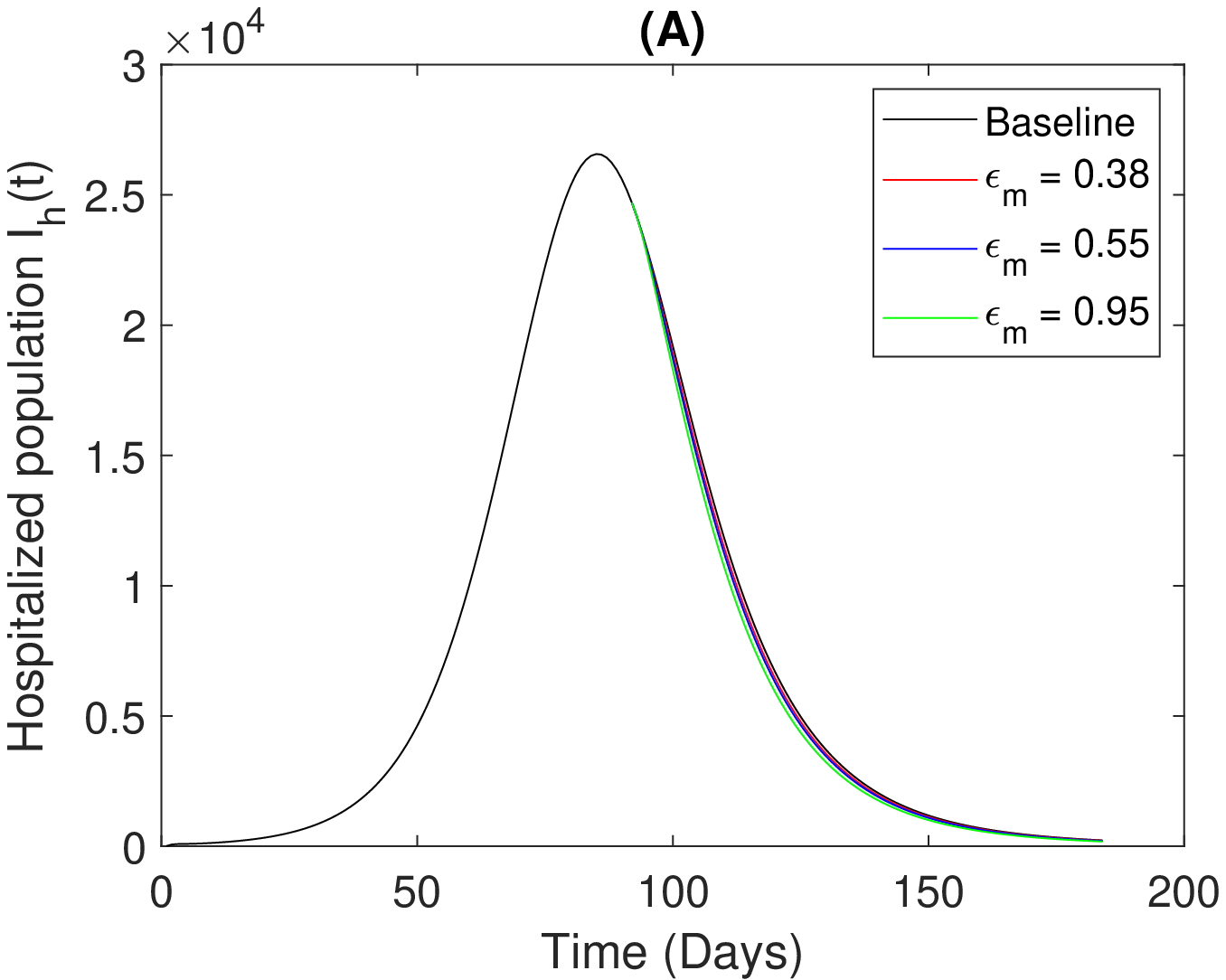}
	\includegraphics[width=0.49\textwidth]{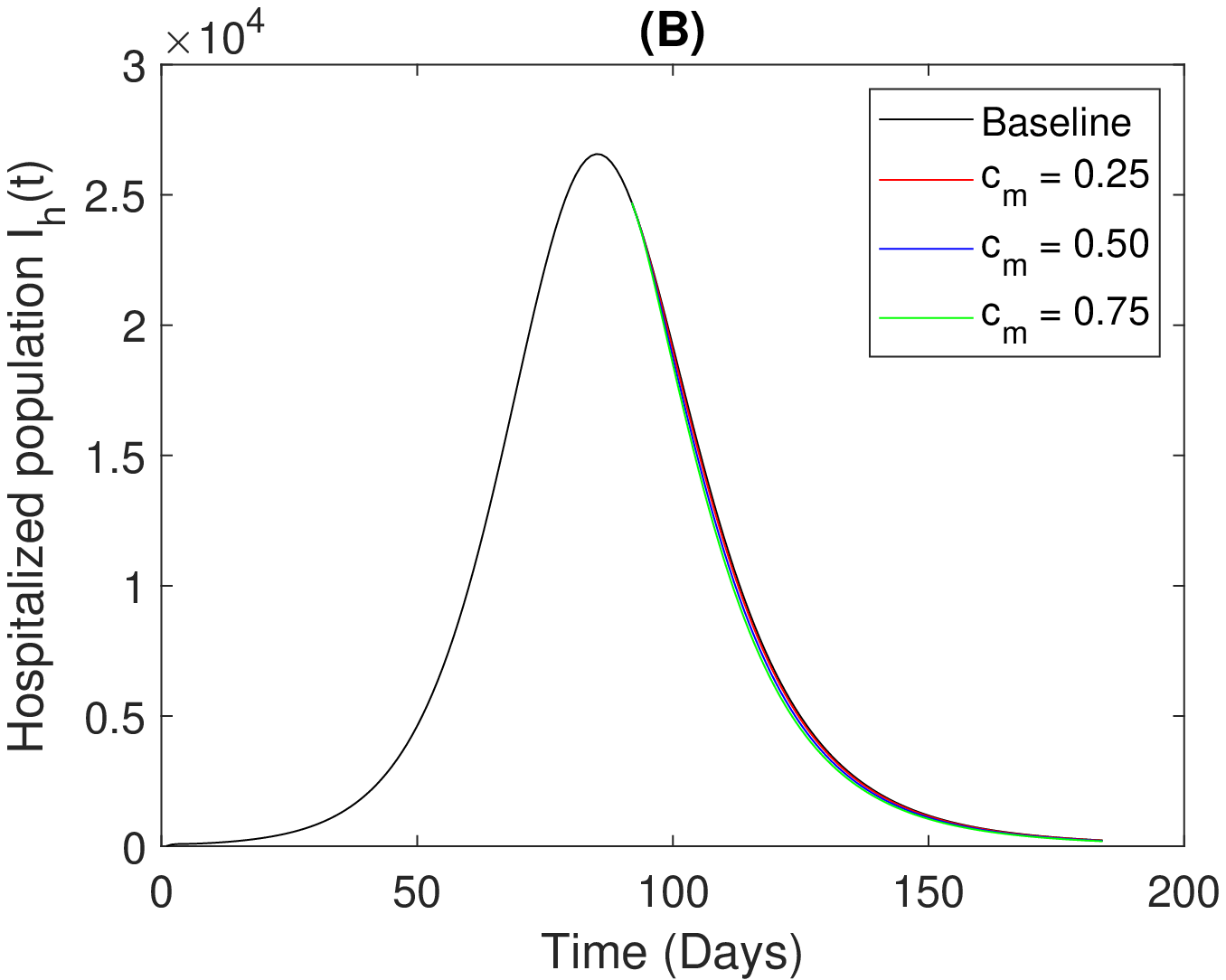}\\
	\includegraphics[width=0.49\textwidth]{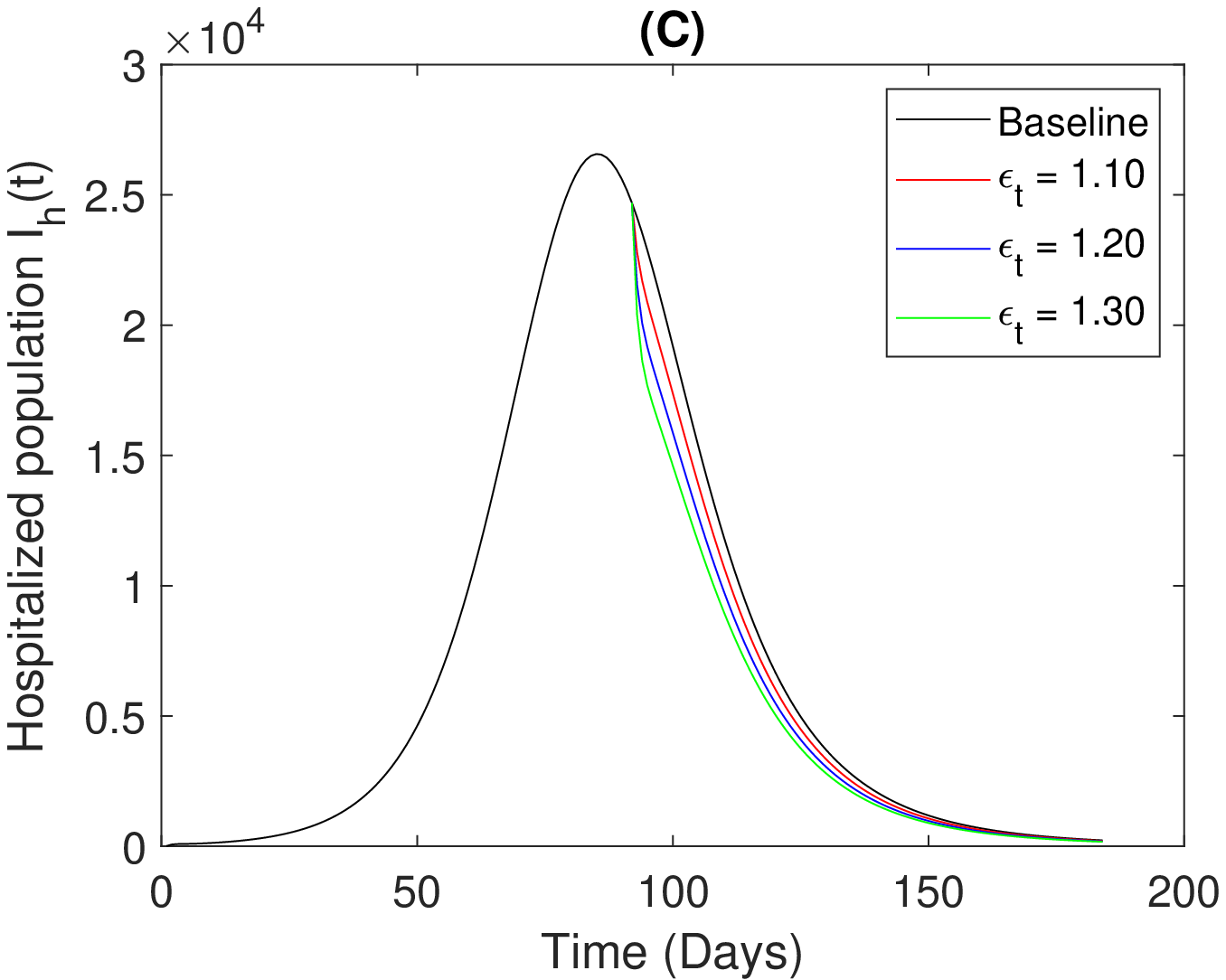}
	\includegraphics[width=0.49\textwidth]{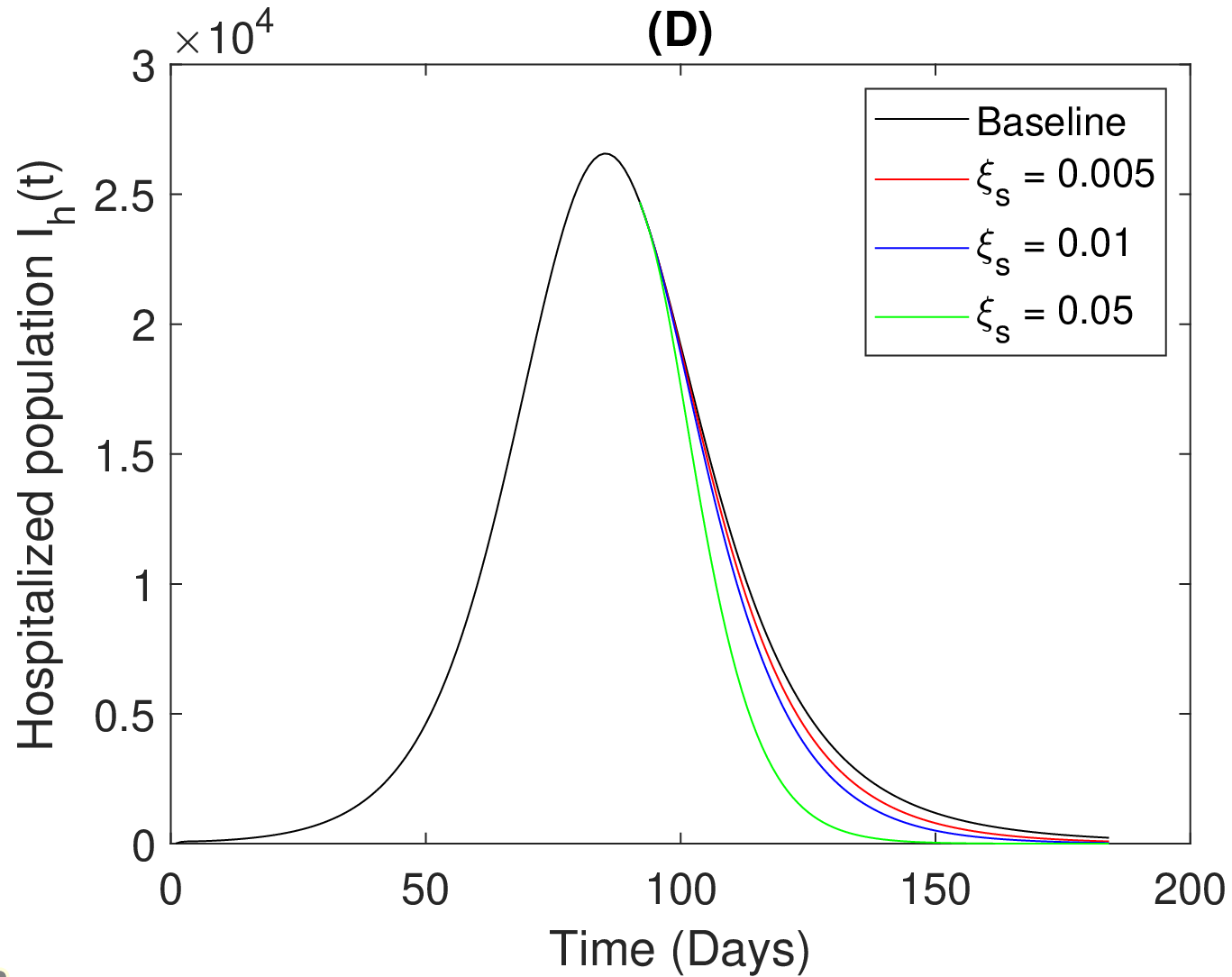}\\
	\includegraphics[width=0.49\textwidth]{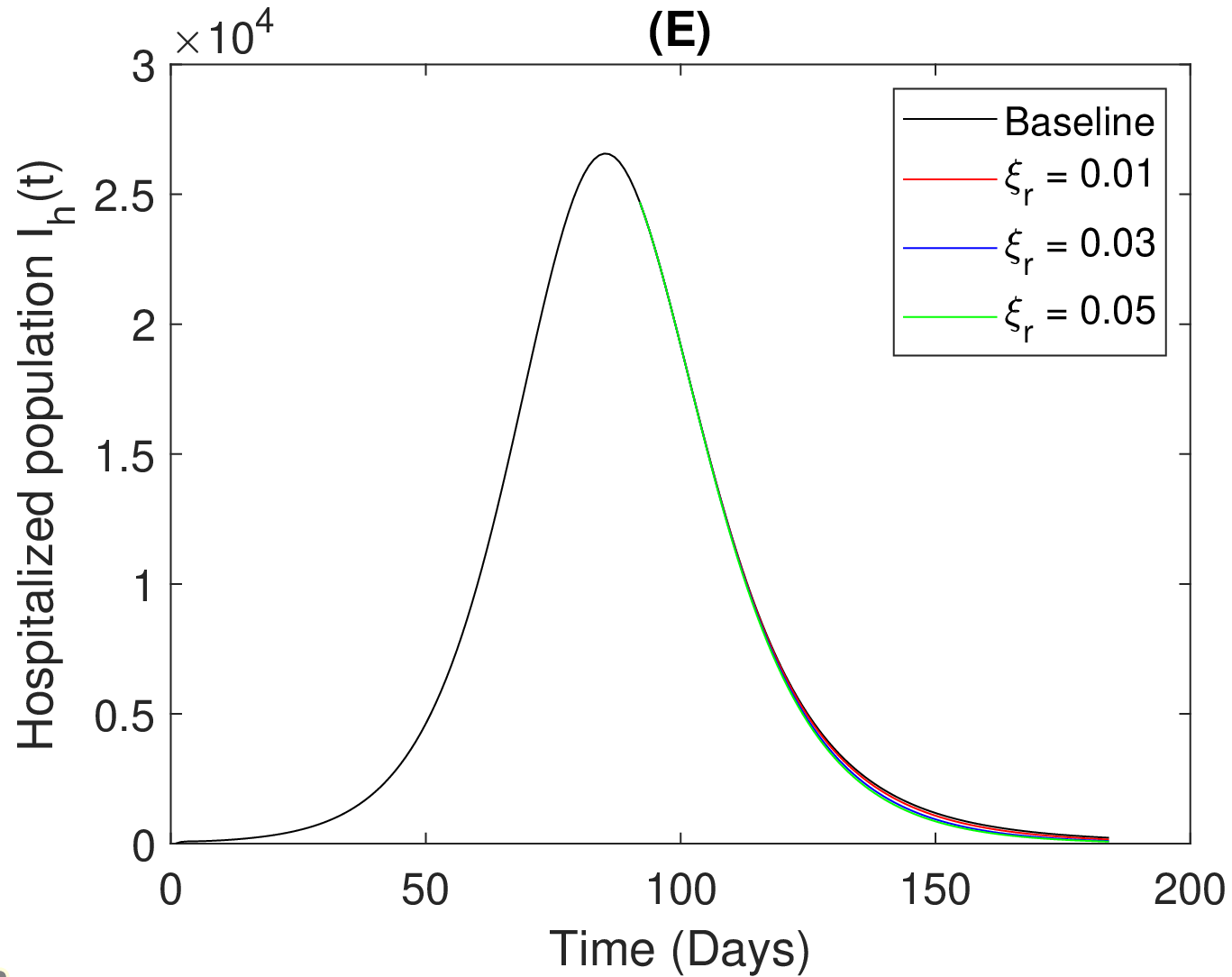}
	\includegraphics[width=0.49\textwidth]{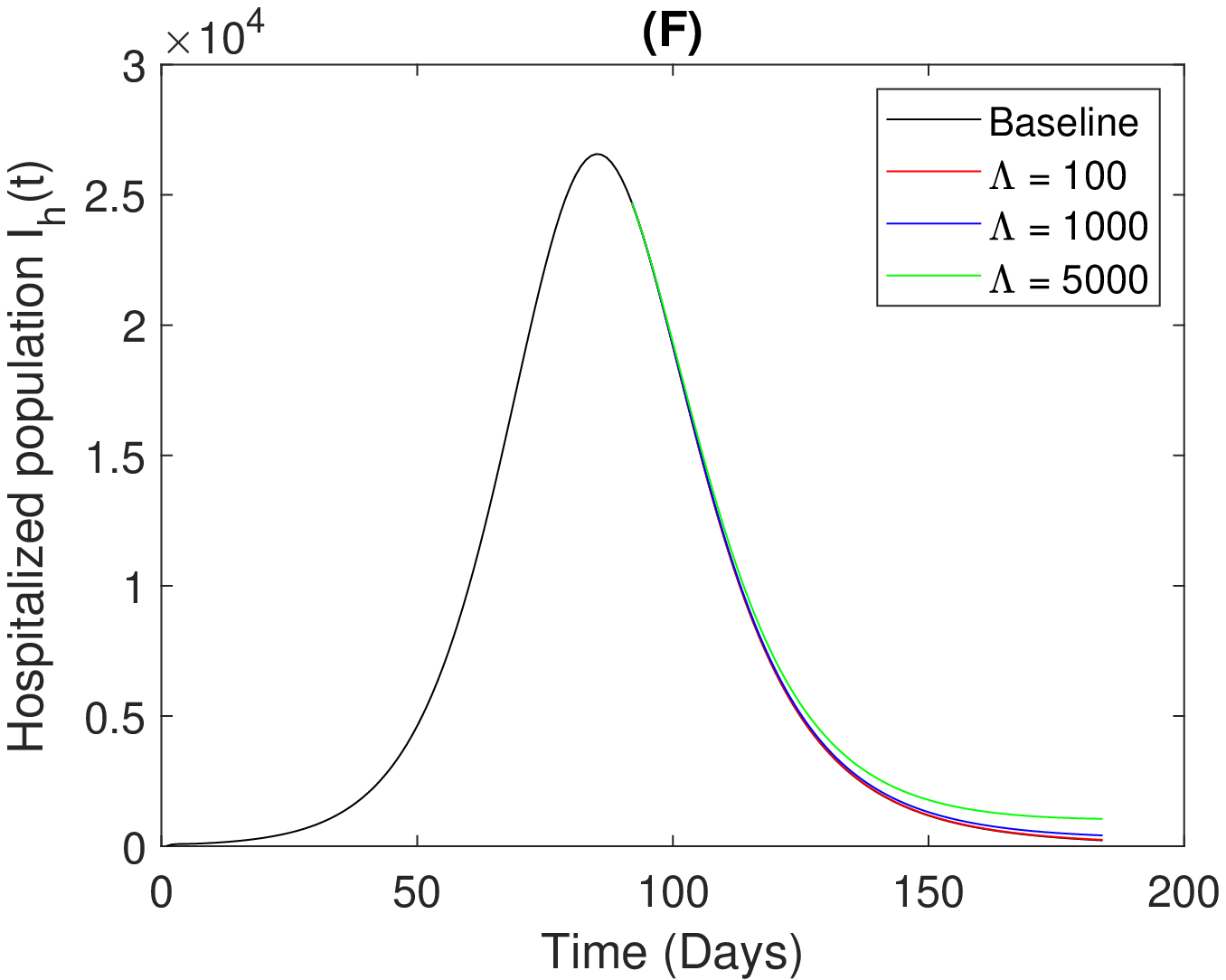}
	\caption{Effects of single control strategies in Chennai namely (A) efficacy of face mask usage, (B) community-wide compliance in face mask usage, (C) increase in recovery rate of hospitalized patients, (D) vaccination rate of susceptible individuals, (E) vaccination rate recovered people and (F) immigration of infectives.}
	\label{Fig:single_control_chennai}
\end{figure}

We can see that the face mask related controls when applied individually have moderate impact on the notified and hospitalized populations of both locations. On the other hand, treatment of hospitalized population has large positive impact on the percentage reduction of hospitalized patients. However, treatment do not show any impact on the notified patients for both the locations. This might be due to the fact that the hospitalized patients are not responsible for new infections directly. Comparing the vaccination rates of susceptible and recovered people, it can be inferred from the table that vaccinating susceptibles has better impact than that of recovered people. At least in the short-term (three months) scenario, it is beneficial to vaccinate susceptible people than to vaccinate recovered people. Furthermore, while evaluating the effects of immigration, we noticed that high level of immigration has significant negative impact on the percentage reduction. Thus, the immigration of un-notified COVID-19 patients has to be as low as possible. This can be done by allowing people from low incidence neighbor districts and restricting the immigration from highly affected areas. From the single control results it is evident that pharmaceutical mitigation strategies like treatment of hospitalized patients and vaccination of susceptible individuals are most effective. Additionally, non-pharmaceutical measures such as face mask use has minimal effects on the reduction of notified and hospitalized COVID-19 cases. 

\begin{table}[H]
	\centering \caption{Percentage reduction in total number of notified and hospitalized COVID-19 patients for different levels of interventions.}\label{percent_reduction_single_control}
	\begin{tabular}{|c|c|c|c|c|}
		\hline
		\multirow{2}{*}{Parameter values} & \multicolumn{2}{c|}{Bangalore urban} & \multicolumn{2}{c|}{Chennai}
		\\ \cline{2-3} \cline{4-5} & Reduction in $I_n$ & Reduction in $I_h$  & Reduction in $I_n$ & Reduction in $I_h$ \\ \hline
		$\epsilon_m$ = 0.38  & 3.25 & 2.91 & 2.50 & 2.35 \\ 
		~~~~~= 0.55	 & 5.21 & 4.67 & 4.03 & 3.78 \\ 
		~~~~~= 0.95  & 9.79 & 8.77 & 7.65 & 7.18 \\ \hline
		$c_m$ = 0.25  & 1.74 & 1.56 & 1.33 & 1.25 \\ 
		~~~~~= 0.50  & 4.64 & 4.15 & 3.58 & 3.36 \\ 
		~~~~~= 0.75  & 7.51 & 6.73 & 5.83 & 5.48 \\ \hline
		$\epsilon_t$ = 1.10  & -0.00 & 8.83 & -0.00 & 8.84 \\ 
		~~~~~= 1.20  & -0.00 & 16.18 & -0.00 & 16.21 \\ 
		~~~~~= 1.30  & -0.00 & 22.41 & -0.00 & 22.44 \\ \hline
		$\xi_s$ = 0.005 & 8.67 & 7.71 & 6.93 & 6.48 \\ 
		~~~~~= 0.01  & 15.91 & 14.20 & 13.06 & 12.24 \\ 
		~~~~~= 0.05  & 39.42 & 35.33 & 34.85 & 32.72 \\ \hline
		$\xi_r$ = 0.01  & 0.61 & 0.53 & 1.24 & 1.15 \\ 
		~~~~~= 0.03  & 1.23 & 1.08 & 2.97 & 2.76 \\ 
		~~~~~= 0.05  & 3.98 & 3.51 & 4.02 & 3.74 \\ \hline
		$\Lambda$ = 100 & -0.21 & -0.32 & -0.19 & -0.19 \\ 
		~~~~~= 1000  & -2.03 & -3.12 & -1.90 & -1.87 \\ 
		~~~~~= 5000  & -9.41 & -15.01 & -8.76 & -8.70 \\ \hline
	\end{tabular}
\end{table}

To investigate the combined effects of control interventions, we examine four different combinations on the total number of hospitalized patients ($I_h^{tot}$) in the three month projection period. The parameter values are taken as mentioned earlier in this section. We draw contour plots with respect to the control parameters with response variable $I_h^{tot}=\int_{t=92}^{184} I_h(t) dt$. Contour plots depicting combination of interventions are displayed in Fig. \ref{Fig:two_controls_Bangalore} and Fig. \ref{Fig:two_controls_Chennai} for Bangalore urban and Chennai districts, respectively. 

\begin{figure}[h]
	\centering
	\includegraphics[width=0.45\textwidth]{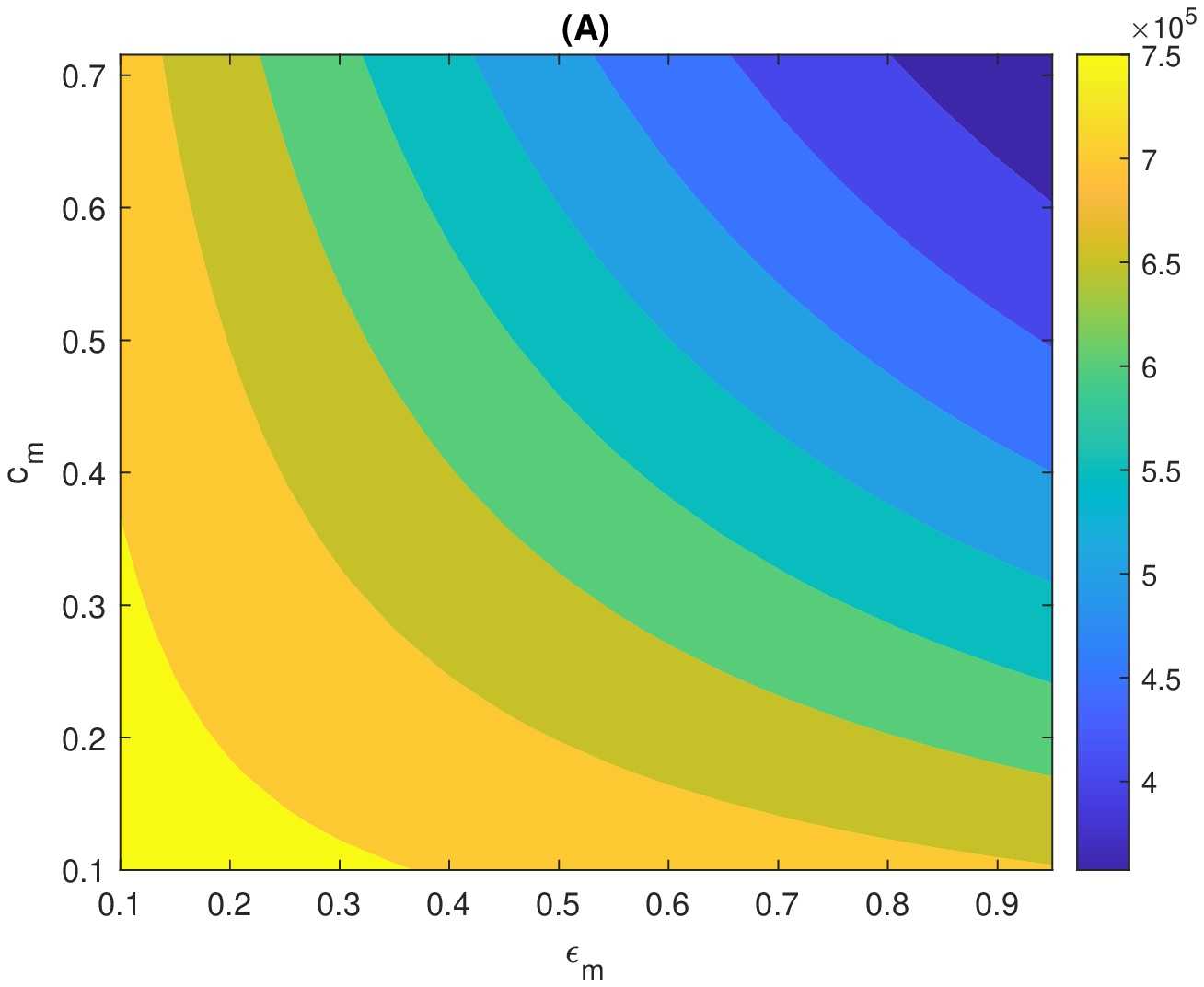}
	\includegraphics[width=0.45\textwidth]{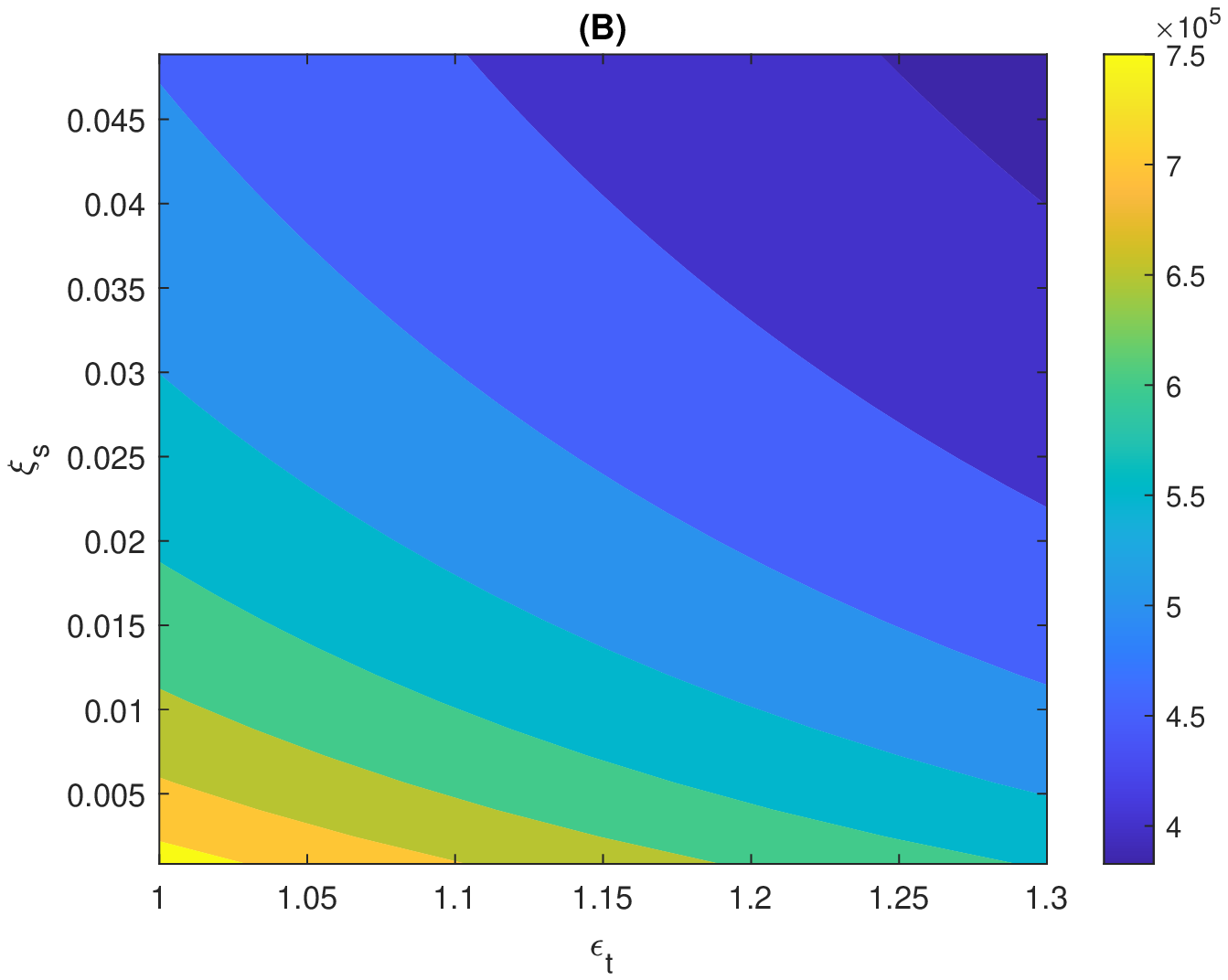}
	\includegraphics[width=0.45\textwidth]{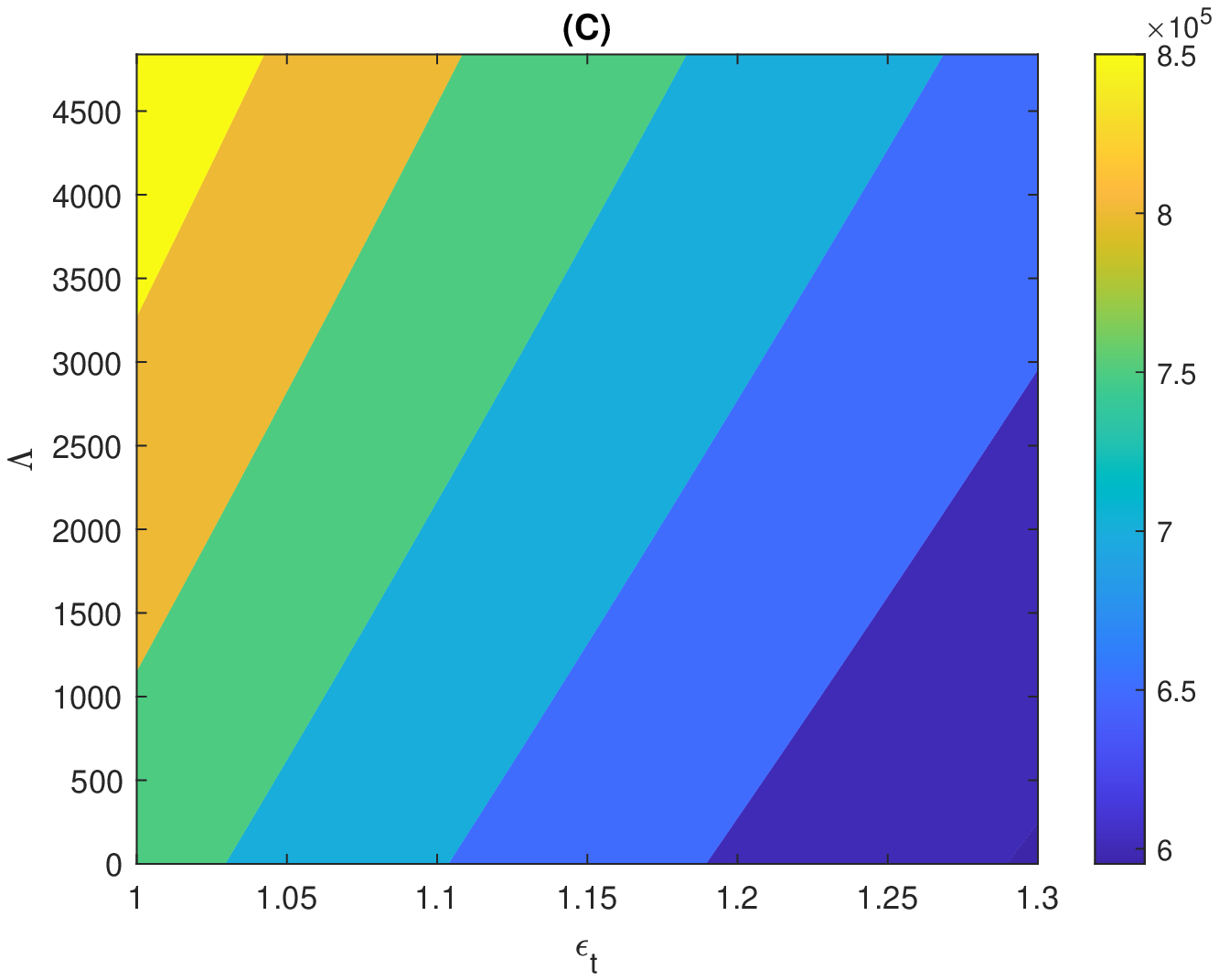}
	\includegraphics[width=0.45\textwidth]{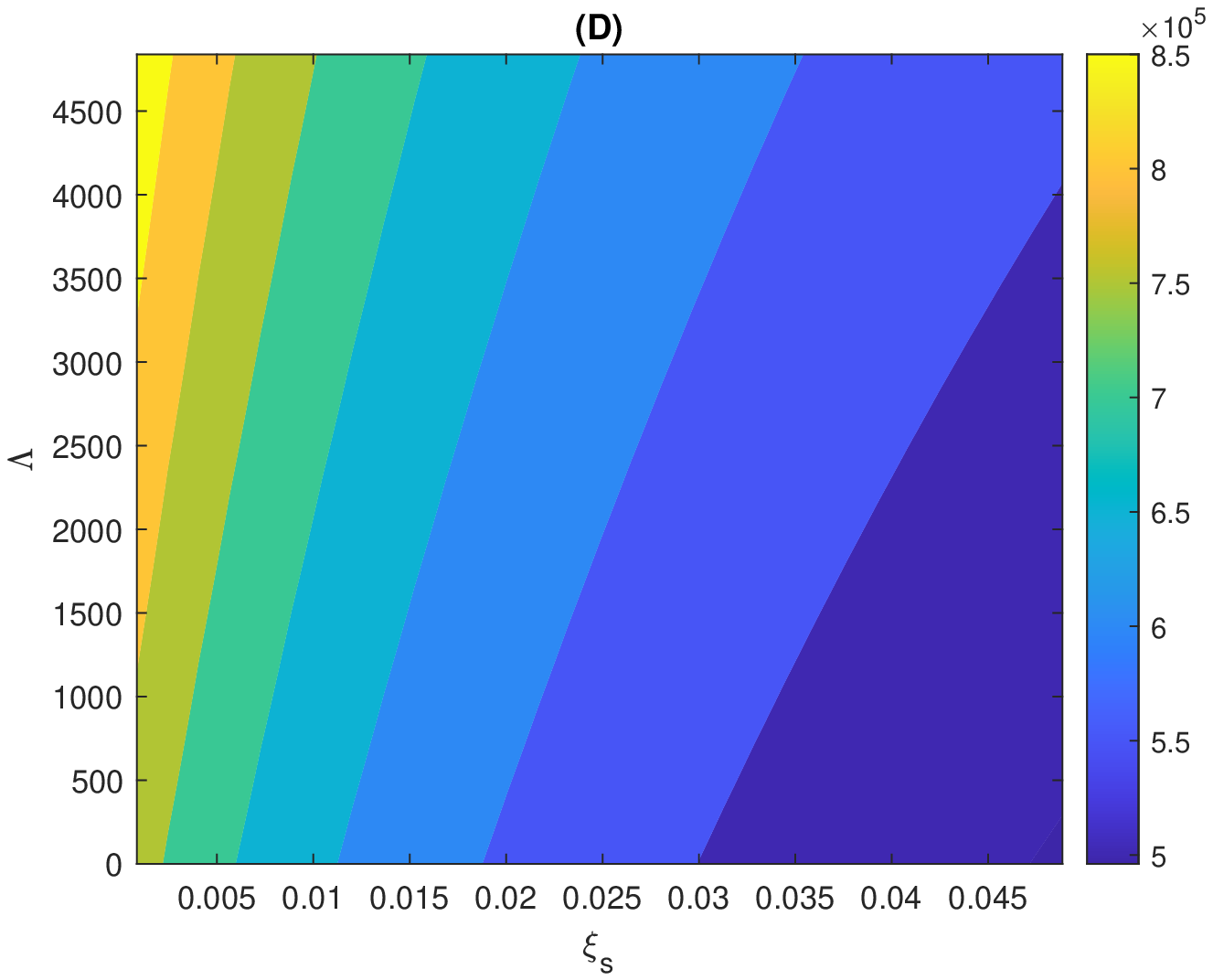}
	\caption{Combination of control strategies and immigration of infectives in Bangalore urban. (A)  efficacy of face mask usage  $ \textendash$ community-wide compliance in face mask usage ($\epsilon_m-c_m$), (B) increase in recovery rate of hospitalized patients $ \textendash$ vaccination rate of susceptible individuals ($\epsilon_t-\xi_s$), (C) increase in recovery rate of hospitalized patients $ \textendash$ immigration of infectives ($\epsilon_t-\Lambda$), (D) vaccination rate of susceptible individuals $ \textendash$ immigration of infectives ($\epsilon_s-\Lambda$).}
	\label{Fig:two_controls_Bangalore}
\end{figure}

For both of the locations, similar scenarios are seen with varying control interventions and immigration of un-notified infectious patients. Simultaneously increasing face mask efficacy and population-wide compliance level will decrease $I_h^{tot}$ in the three month projection period ( Fig. \ref{Fig:two_controls_Bangalore}(A) and \ref{Fig:two_controls_Chennai}(A)). Thus both the face mask related parameters will have stronger impact on the reduction of $I_h^{tot}$. On the other hand, vaccination rate of susceptibles and increase in the treatment rate of hospitalized patients will have similar effects for both the locations (as seen in Fig. \ref{Fig:two_controls_Bangalore}(B) and \ref{Fig:two_controls_Chennai}(B)). Further, the immigration of un-notified COVID-19 patients will obviously show increase in $I_h^{tot}$.
But keeping $\Lambda$ low and increasing treatment rate simultaneously will keep $I_h^{tot}$ under control (see Fig. \ref{Fig:two_controls_Bangalore}(C) and \ref{Fig:two_controls_Chennai}(C)). From the number of $I_h^{tot}$, it should be noted that this combination is risky if there is no other intervention in action. However, if the vaccination of susceptibles are performed with a increased rate, then the immigration may not have very adverse effects (see Fig. \ref{Fig:two_controls_Bangalore}(D) and \ref{Fig:two_controls_Chennai}(D)). It can be noted that high level of face mask use with maximum efficacy is more effective than the other combined strategies in terms of reduction in $I_h^{tot}$. But the feasibility of face mask usage in such intensity may be critical. From the contour plot analysis, it can be argued that the combination of $\xi_s$ and $\epsilon_t$ is more feasible and has competitive effects on $I_h^{tot}$. Further numerical simulations are necessary to study combination of more than two control interventions and immigration of un-notified infectives.

\begin{figure}[h]
	\centering
	\includegraphics[width=0.45\textwidth]{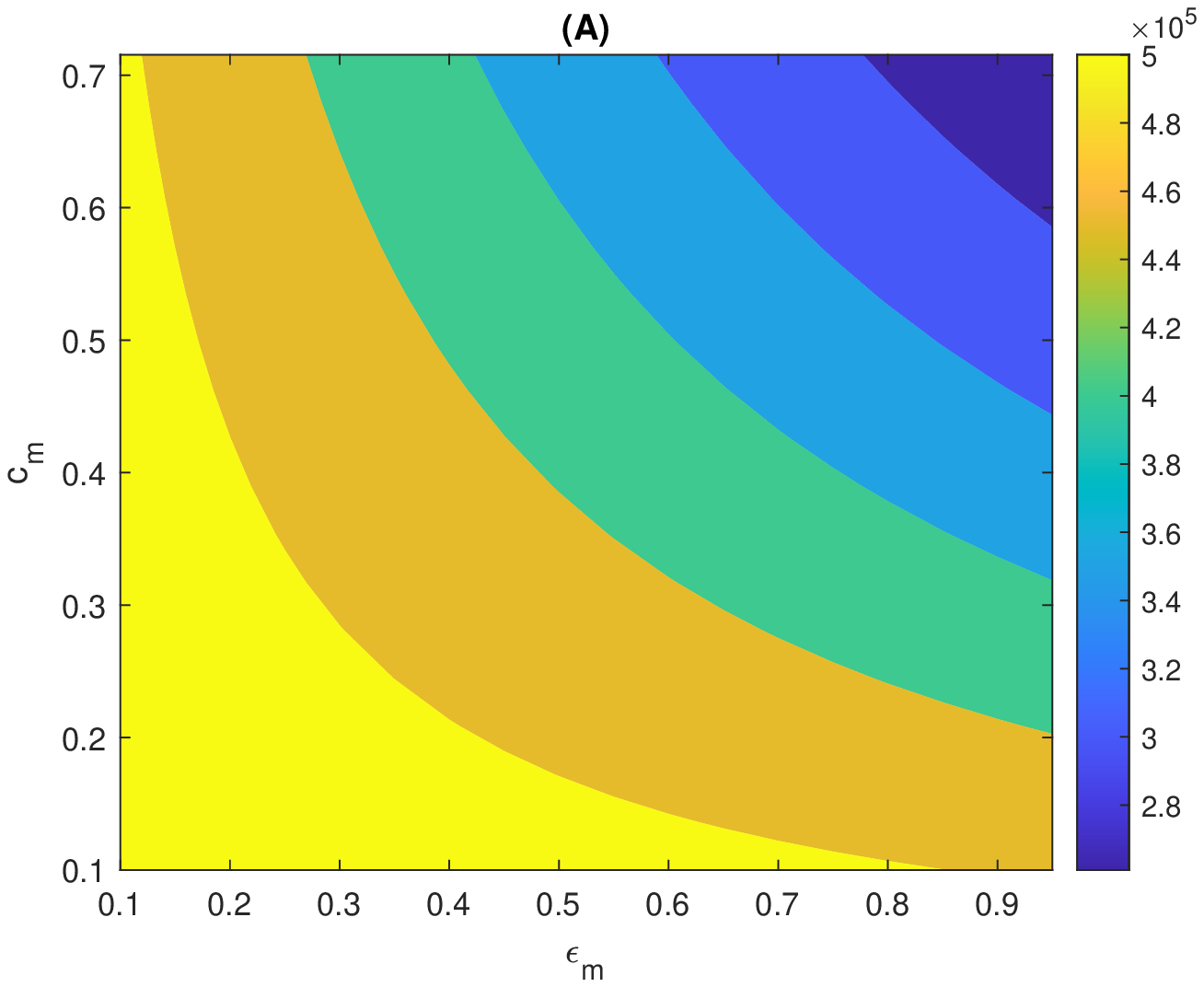}
	\includegraphics[width=0.45\textwidth]{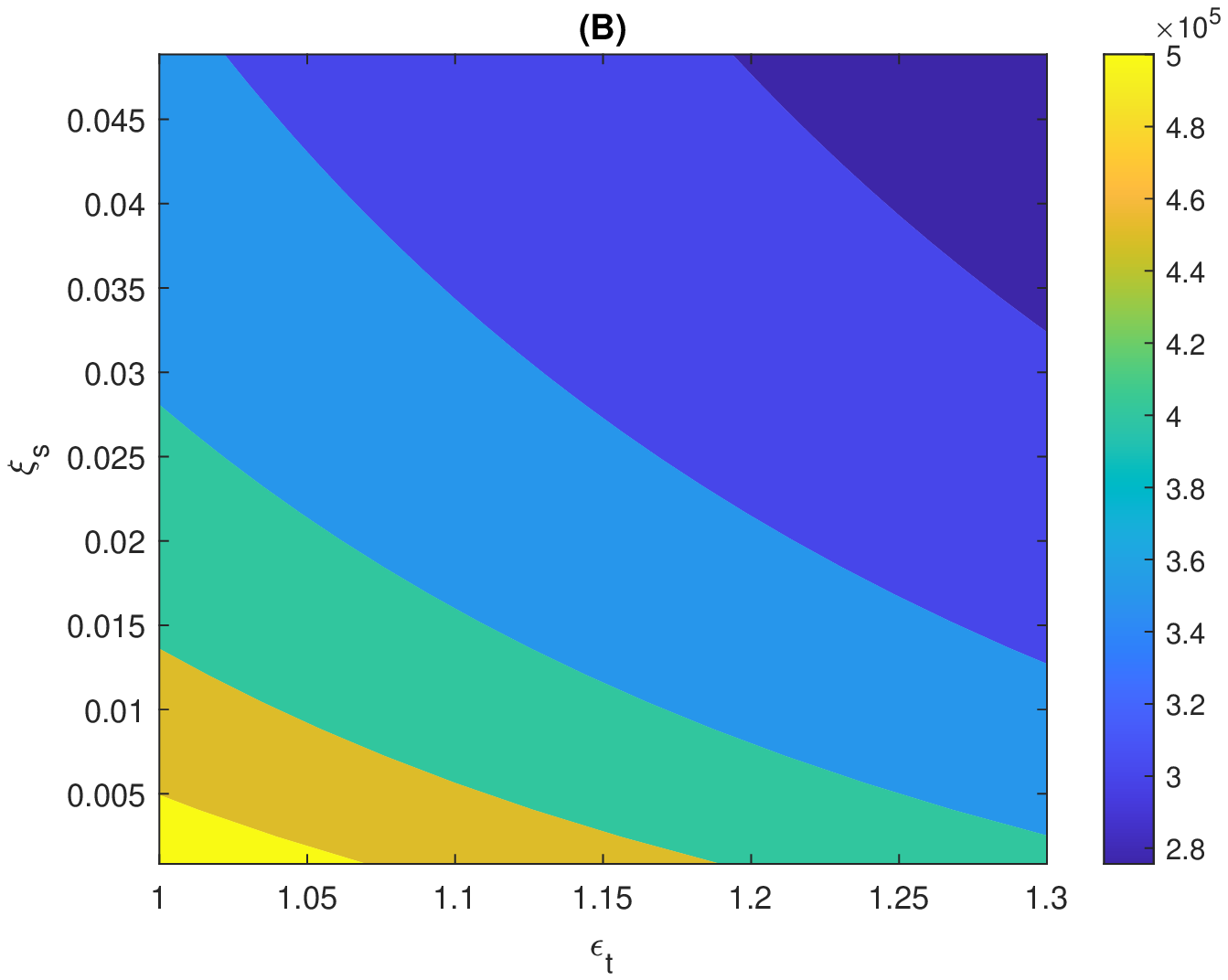}
	\includegraphics[width=0.45\textwidth]{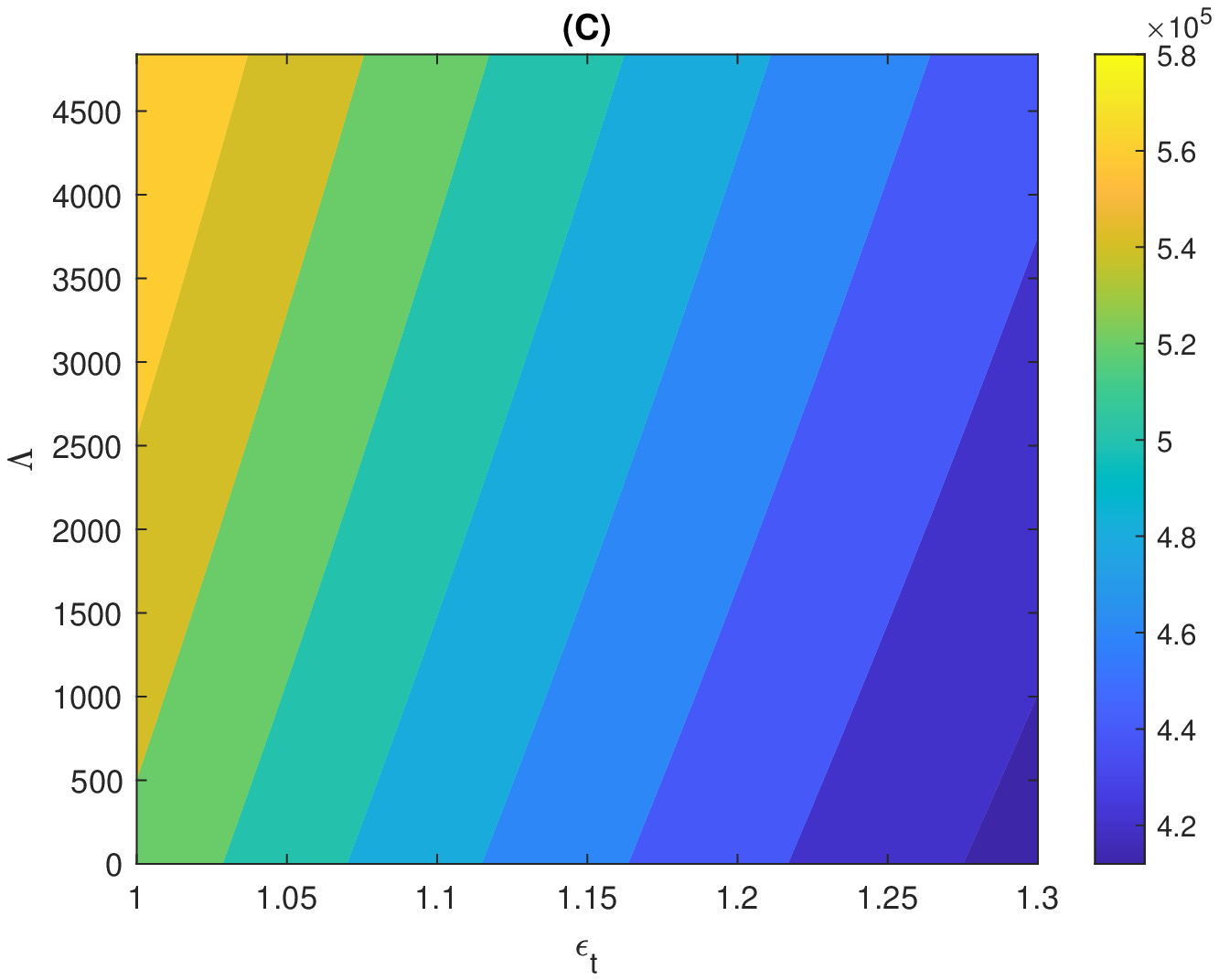}
	\includegraphics[width=0.45\textwidth]{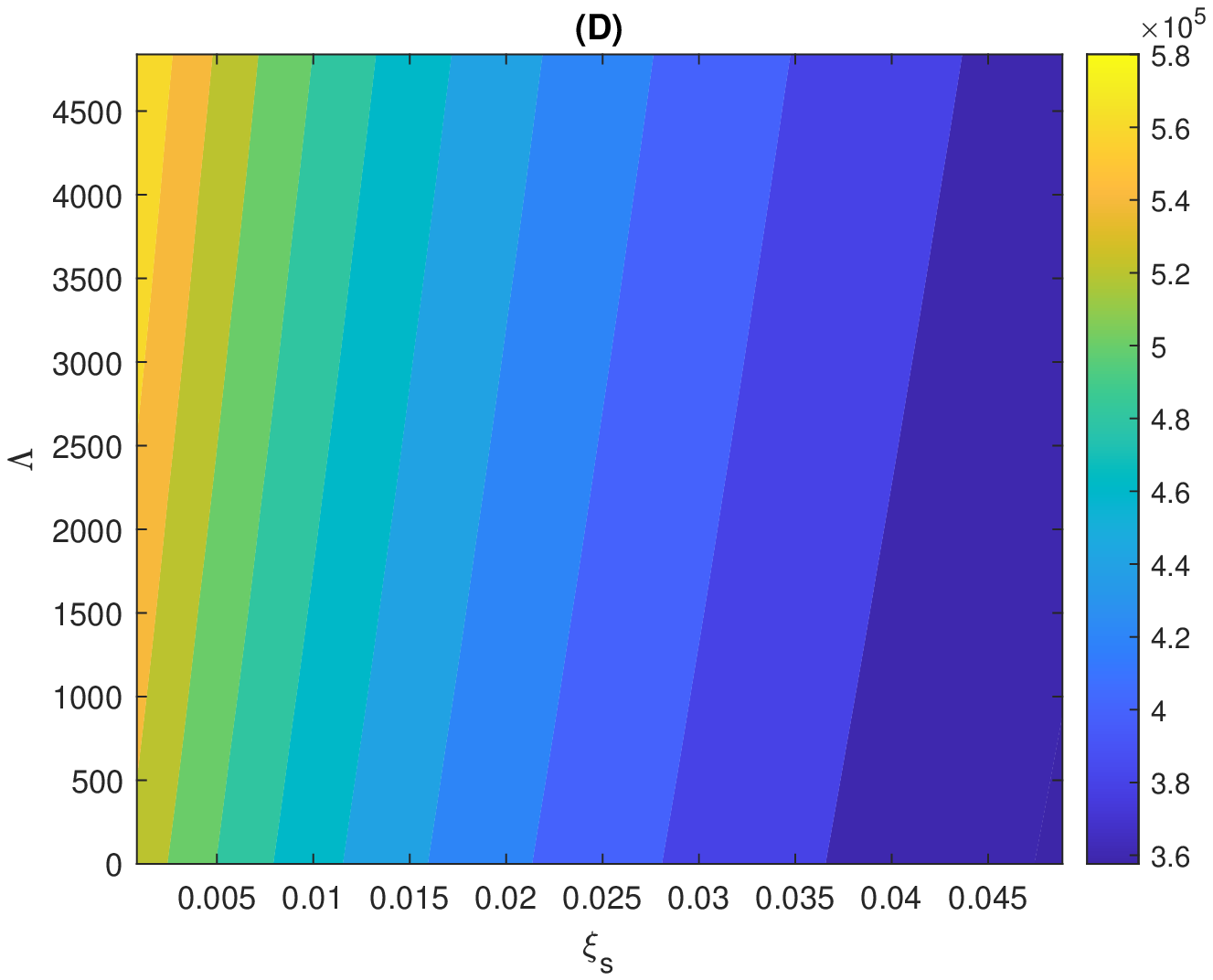}
	\caption{Combination of control strategies and immigration of infectives in Chennai. (A)  efficacy of face mask usage  $ \textendash$ community-wide compliance in face mask usage ($\epsilon_m-c_m$), (B) increase in recovery rate of hospitalized patients $ \textendash$ vaccination rate of susceptible individuals ($\epsilon_t-\xi_s$), (C) increase in recovery rate of hospitalized patients $ \textendash$ immigration of infectives ($\epsilon_t-\Lambda$), (D) vaccination rate of susceptible individuals $ \textendash$ immigration of infectives ($\epsilon_s-\Lambda$).}
	\label{Fig:two_controls_Chennai}
\end{figure}

Three or more control interventions and the immigration of infectives are now applied to examine their combined impact on the COVID-19 cases in the three months projection period. We calculate the percentage reduction in notified COVID-19 patients and hospitalized COVID-19 patients using the formula \eqref{percent_reduction}. The fixed parameters and initial conditions are taken from Tables \ref{tab:mod1} and \ref{tab:ICs_blore_chennai}. Four different combination strategies are investigated namely, \textit{strategy I}: face mask efficacy - community level usage of face masks - vaccination of susceptibles ($\epsilon_m$-$c_m$-$\xi_s$), \textit{strategy II}: face mask efficacy - community level usage of face masks - vaccination of susceptibles - increase in treatment rate ($\epsilon_m$-$c_m$-$\xi_s$-$\epsilon_t$), \textit{strategy III}: face mask efficacy - community level usage of face masks - vaccination of susceptibles - increase in treatment rate - immigration of un-notified infectives ($\epsilon_m$-$c_m$-$\xi_s$-$\epsilon_t$-$\Lambda$) and \textit{strategy IV}: combination of all control interventions with immigration of un-notified infectives ($\epsilon_m$-$c_m$-$\xi_s$-$\epsilon_t$-$\Lambda$-$\xi_r$). The parameter values are chosen for feasibility of the strategies i.e., we do not consider hypothetically high values of the parameters. The percentage reduction of $I_n$ and $I_h$ for Bangalore urban and Chennai are reported in Table \ref{percent_reduction_multiple_control}. 

\begin{table}
	\centering \caption{Percentage reduction in total number of notified and hospitalized COVID-19 patients for different strategies.}\label{percent_reduction_multiple_control}
	\hskip-0.9cm\begin{tabular}{|c|p{4cm}|c|c|c|c|}
		\hline
		&\multirow{2}{*}{Parameter values} & \multicolumn{2}{c|}{Bangalore urban} & \multicolumn{2}{c|}{Chennai}\\ \cline{3-6}
		& & Reduction in $I_n$ & Reduction in $I_h$  & Reduction in $I_n$ & Reduction in $I_h$ \\ \hline
		&$\epsilon_m$ = 0.38, $c_m$ = 0.25, $\xi_s$ = 0.005  & 16.92 & 15.12 & 13.66 & 12.80 \\ \cline{2-6}
		\multirow{2}{*}{\rotatebox[origin=c]{90}{Strategy-I}}&$\epsilon_m$ = 0.38, $c_m$ = 0.50, $\xi_s$ = 0.005	& 25.71 & 23.02 & 21.18 & 19.87 \\ \cline{2-6}
		&$\epsilon_m$ = 0.55, $c_m$ = 0.25, $\xi_s$ = 0.005  & 20.92 & 18.72 & 17.03 & 15.98 \\ \hline
		&$\epsilon_m$ = 0.38, $c_m$ = 0.25, $\xi_s$ = 0.005, $\epsilon_t$ = 1.1  & 16.92 & 22.58 & 13.65 & 20.49 \\ \cline{2-6}
		\multirow{2}{*}{\rotatebox[origin=c]{90}{Strategy-II}}&$\epsilon_m$ = 0.38, $c_m$ = 0.50, $\xi_s$ = 0.005, $\epsilon_t$ = 1.1  & 25.71 & 29.77 & 21.18 & 26.92 \\ \cline{2-6}
		&$\epsilon_m$ = 0.55, $c_m$ = 0.25, $\xi_s$ = 0.005, $\epsilon_t$ = 1.1  & 20.92 & 25.85 & 17.03 & 23.37 \\ \hline
		&$\epsilon_m$ = 0.38, $c_m$ = 0.25, $\xi_s$ = 0.005, $\epsilon_t$ = 1.1, $\Lambda$ = 100  & 16.76 & 22.33 & 13.50 & 20.35 \\  \cline{2-6}
		&$\epsilon_m$ = 0.38, $c_m$ = 0.50, $\xi_s$ = 0.005, $\epsilon_t$ = 1.1, $\Lambda$ = 100  & 25.57 & 29.53 & 21.03 & 26.78 \\  \cline{2-6}
		\multirow{4}{*}{\rotatebox[origin=c]{90}{Strategy-III}}&$\epsilon_m$ = 0.55, $c_m$ = 0.25, $\xi_s$ = 0.005, $\epsilon_t$ = 1.1, $\Lambda$ = 100 & 20.77 & 25.61 & 16.88 & 23.24 \\  \cline{2-6}
		&$\epsilon_m$ = 0.38, $c_m$ = 0.25, $\xi_s$ = 0.005, $\epsilon_t$ = 1.1, $\Lambda$ = 1000  & 15.38 & 20.11 & 12.15 & 19.11 \\  \cline{2-6}
		&$\epsilon_m$ = 0.38, $c_m$ = 0.50, $\xi_s$ = 0.005, $\epsilon_t$ = 1.1, $\Lambda$ = 1000  & 24.28 & 27.38 & 19.75 & 25.60 \\  \cline{2-6}
		&$\epsilon_m$ = 0.55, $c_m$ = 0.25, $\xi_s$ = 0.005, $\epsilon_t$ = 1.1, $\Lambda$ = 1000 & 19.42 & 23.42 & 15.56 & 22.02 \\ \hline
		\multirow{2}{*}{\rotatebox[origin=c]{90}{Strategy-IV~~~~}}&$\epsilon_m$ = 0.38, $c_m$ = 0.50, $\xi_s$ = 0.005, $\xi_r$ = 0.01, $\epsilon_t$ = 1.1, $\Lambda$ = 100  & 25.99 & 29.87 & 21.55 & 27.23 \\ \cline{2-6}
		& $\epsilon_m$ = 0.38, $c_m$ = 0.50, $\xi_s$ = 0.005, $\xi_r$ = 0.01, $\epsilon_t$ = 1.1, $\Lambda$ = 1000  & 24.78 & 27.78 & 20.36 & 26.12 \\ \hline
	\end{tabular}
\end{table}

From the combinations of strategy I and strategy II, we observe that the second case of strategy II has most the positive reduction for both the districts. However, the travel restrictions are being eased and more people tend to come to these districts. Thus, the immigration of un-notified infectives is also evident and hence it is necessary to consider non-zero values of $\Lambda$. It can be observed that strategy IV (first case) with specified control parameter values and low immigration has the most fruitful results on both the locations (marked bold in Table \ref{percent_reduction_multiple_control}). Alternatively, the second case of strategy II show a similar case reduction in both the locations. This strategy does not include immigration and therefore is not feasible. It can be seen that the second case of strategy III also shows competitive results in terms of percentage reduction of both populations. This strategy is feasible and can be implemented with less effort by the policy makers.

\section{Results and conclusions}\label{conclusion}
An SEIR-type compartmental model for the transmission dynamics of COVID-19 outbreak incorporating free virus concentration of the environment is considered in this paper. The model assumes standard incidence function while COVID-19 transmission by un-notified and notified individuals. Mass action incidence is considered for the airborne transmission of the virus. The mathematical model consists of a seven-dimensional system of ordinary differential equations. The model is mathematically analysed to obtain insight into the long-term dynamic features of the disease. Basic properties of the model are studied by finding positivity and boundedness of the solutions when the initial conditions are taken non-negative for all the state variables. The disease-free equilibrium (DFE) of the model is found to be unique and the basic reproduction number is found using the next-generation matrix method. The model has a locally stable DFE whenever the basic reproduction number is less than unity otherwise the DFE becomes unstable. Moreover, the global stability of the DFE is guaranteed when $\mathscr{R}_0<1$. This can be inferred that COVID-19 will be mitigated from the community if the corresponding parameters are in the range to ensure $\mathscr{R}_0<1$. The existence of the endemic equilibrium is investigated and it is found that at most two endemic equilibria may exist for the system. Forward transcritical bifurcation is established both analytically and numerically between DFE and one of the endemic equilibrium (see Fig. \ref{forward_bif}). Further, it is numerically shown that all the infected compartments tend to vanish in the long run when $\mathscr{R}_0<1$ and approaches a non-zero equilibrium when $\mathscr{R}_0>1$ (shown in Fig. \ref{time_series}). 

We calibrated the proposed model parameters to fit daily mortality data of two districts of India during the period of March $1^{st}$, 2021 to May $31^{st}$, 2021. From the fitting of the model output to mortality data, it can be seen that the model output closely anticipates the real data for both the districts. Afterwards, we investigate the impacts of control strategies and immigration of un-notified persons on the number of hospitalized patients. We concentrate on both pharmaceutical and non-pharmaceutical control measures as they are being implemented simultaneously to combat the deadly disease. Two pharmaceutical measures namely, treatment of hospitalized persons and vaccination of susceptibles as well as recovered persons are considered. On the other hand, face mask efficacy and community-wide compliance of face masks are considered as non-pharmaceutical control measures. Along with these interventions, we also study the effect of immigrating un-notified persons from other communities. Individual controls and immigration of infectives are initially projected to understand their sole impact on the hospitalized cases. It is found that treatment of hospitalized patients and vaccination of susceptible impacts has a significant impact on the reduction of hospitalized cases (see Fig. \ref{Fig:single_control_Bangalore} and Fig. \ref{Fig:single_control_chennai}). Therefore, pharmaceutical measures are more reliable than non-pharmaceutical controls. Additionally, immigration of un-notified infected persons may drive the hospitalized cases to higher prevalence. We calculated the percentage reduction in total hospitalized patients to get a quantitative idea of the control measures and immigration. It is reinforced that face mask related controls have a lower impact than pharmaceutical measures like treatment and vaccination. The immigration of un-notified patients has a negative impact on the total hospitalized cases and thus this should be kept as low as possible. Low immigration of infectives can be maintained by restricting immigration from highly affected neighbouring districts. Further, we investigate the combined effect of two strategies on the total number of hospitalized patients for both districts. We found that the combination of treatment and vaccination of susceptible individuals will be very effective to reduce the disease. Finally, we compare three or more interventions by calculating the percentage reduction of notified and hospitalized patients with different levels of feasible control measures and immigration rates. We examine strategy I -- strategy IV (as defined in the previous section) and find that combination of all controls with low immigration will have the best effect on the reduction of the disease. However, the second case of strategy III also show competitive results in terms of percentage reduction of both populations. This strategy differs from the strategy IV by the vaccination of recovered people. Therefore, it can be concluded that when vaccine crisis is there, the public health authorities may not vaccinate the recovered people. In other words, the people who were tested COVID-19 positive in past and are now healthy (they may have developed natural antibodies against SARS-CoV-2) may be vaccinated later when vaccines are available. Low immigration of un-notified patients can be regulated by restricting the movement of people from high prevalence adjacent districts. This analysis can be made with data from other profoundly affected districts with proper parametrization.

In summary, the results indicate that vaccination of susceptible individuals and treatment of hospitalized patients are very crucial to control COVID-19 in the two locations. In the three months short-term period, vaccination of susceptible people should be prioritized over vaccination of recovered people for a better outcome. Whenever vaccine crisis is there, governments may avoid the vaccination of people who were tested COVID-19 positive and are now healthy. However, increased quality and quantity of mask use are also helpful. Along with face mask use, treatment of hospitalized patients and vaccination of susceptibles, immigration should be allowed in a supervised manner so that economy of the overall society remain healthy.

\section*{Acknowledgements} 
Research of Indrajit Ghosh is supported by National Board for Higher Mathematics (NBHM) postdoctoral fellowship (Ref. No: 0204/3/2020/R \& D-II/2458), Department of Atomic Energy, Government of India. 

\section*{Conflict of interest}
The authors declare that they have no conflict of interest.
%%%%%%%%%%%%%%%%%%%%%%%%%%%%%%%%%%%%%%%%%%%%%%%%%%%%%%%%%%%%%%%%

\bibliographystyle{plain}
\biboptions{square}
\bibliography{bib_covid}

\end{document}